\renewcommand{\SAT}[1][]{\ifthenelse{\isempty{#1}}{}{#1\textsf{-}}\mathsf{SAT}}
\newcommand{\sSAT}[1][]{\#\SAT[#1]}
\newcommand{\sP}{\#\P}
\newcommand{\calH}{\mathcal{H}}
\newcommand{\calF}{\mathcal{F}}
\newcommand{\calG}{\mathcal{G}}
\newcommand{\calP}{\mathcal{P}}
\newcommand{\Ginc}{\calG_{\mathsf{inc}}}
\newcommand{\MIMw}{\mathsf{MIM}\text{-width}}
\newcommand{\mimw}{\mathbf{mimw}}
\newcommand{\decDNNF}{\mathsf{dec}\text{-}\mathsf{DNNF}}
\newcommand{\DNNF}{\mathsf{DNNF}}
\newcommand{\NNF}{\mathsf{NNF}}
\newcommand{\CNF}[1][]{\ifthenelse{\isempty{#1}}{}{#1\textsf{-}}\mathsf{CNF}}
\newcommand{\DNF}[1][]{\ifthenelse{\isempty{#1}}{}{#1\textsf{-}}\mathsf{DNF}}
\newcommand{\var}{\mathsf{var}}
\newcommand{\size}{\textsf{size}}
\newcommand{\sat}{\mathsf{sat}}
\newcommand{\Vle}[1]{V_{\leq #1}}
\newcommand{\Vlt}[1]{V_{<#1}}
\newcommand{\Vgt}[1]{V_{> #1}}
\newcommand{\Vge}[1]{V_{\ge #1}}
\newtheorem{theorem}{Theorem}
\newtheorem{lemma}[theorem]{Lemma}
\newtheorem{corollary}[theorem]{Corollary}
\newtheorem{definition}[theorem]{Definition}
\begin{document}
\title{Understanding the complexity of $\sSAT$ using knowledge compilation\footnote{This work was partially supported by ANR AGGREG.}}

\author{Florent Capelli \\ Birkbeck College, University of London \\ \small{florent@dcs.bbk.ac.uk}}
\date{\today}
% \IEEEauthorblockA{Birkbeck College \\
% University of London\\
% Email: florent@dcs.bbk.ac.uk}

\maketitle

\begin{abstract}
Two main techniques have been used so far to solve the $\sP$-hard problem $\sSAT$. The first one, used in practice, is based on an extension of DPLL for model counting called exhaustive DPLL. The second approach, more theoretical, exploits the structure of the input to compute the number of satisfying assignments by usually using a dynamic programming scheme on a decomposition of the formula. In this paper, we make a first step toward the separation of these two techniques by exhibiting a family of formulas that can be solved in polynomial time with the first technique but needs an exponential time with the second one. We show this by observing that both techniques implicitely construct a very specific boolean circuit equivalent to the input formula. We then show that every $\beta$-acyclic formula can be represented by a polynomial size circuit corresponding to the first method and exhibit a family of $\beta$-acyclic formulas which cannot be represented by polynomial size circuits corresponding to the second method. This result shed a new light on the complexity of $\sSAT$ and related problems on $\beta$-acyclic formulas. As a byproduct, we give new handy tools to design algorithms on $\beta$-acyclic hypergraphs.
%  A successful line of research for solving the $\#\P$-hard problem $\#\SAT$ is to design algorithms that exploit the structure of the input to efficiently count the number of satisfying assignments of the formula. All of them are based on the same kind of dynamic programming scheme except the algorithm for $\beta$-acyclic formulas. The aim of this paper is to formalize these two different kinds of techniques in order to compare them. Our formalization relies on concepts from knowledge compilation. We show that, in our framework, the standard dynamic programming scheme cannot be used to solve the $\beta$-acyclic case efficiently. We contrast this result by showing that existing tools for $\sSAT$ based on exhaustive DPLL can however solve this case in polynomial time.
\end{abstract}

\section{Introduction}

The problem $\sSAT$ of counting the satisfying assignments of a given $\CNF$-formula is a central problem to several areas such as probabilistic reasoning~\cite{Roth96,BacchusDP03} and probabilistic databases~\cite{BeameLRS14,BeameLRS13,JhaS13}. This problem is much harder than $\SAT$, its associated decision problem. For example, the problem $\SAT[2]$ of deciding if a formula having at most two literals per clause if satisfiable is easy where counting those satisfying assignments is as hard as $\sSAT$. Even computing a $2^{n^{1-\epsilon}}$-approximation in the restricted case of monotone $\SAT[2]$ is hard for any $\epsilon > 0$~\cite{Roth96}. 

In order to tackle this problem, two main approaches have been used so far. The first approach -- applied in practical tools for solving $\sSAT$ -- follows the successful road paved by $\SAT$-solvers: it is based on a variation of DPLL~\cite{DavisP60} called {\em exhaustive DPLL}~\cite{HuangD05} and the approach is mainly focused on improving the heuristics used for eliminating variables and choosing which subformulas should be cached during the computation. The performance of such tools -- though impressive for such a hard problem~\cite{HuangD05,sang04,thurley06,BacchusDP03} -- lag far behind the state-of-the-art $\SAT$-solvers. This gap is mainly explained by the differences between the hardness of both problems, but also by the fact that optimizations for exhaustive DPLL are inspired by those used in $\SAT$-solvers and not always relevant for model counting~\cite{SangBK05}.
The second -- more theoretical -- approach focuses on {\em structural restrictions} of the input formula.  The main idea of this approach is to solve $\sSAT$ more quickly on formulas where interaction between the clauses and the variables is restricted. This interaction is usually represented by a graph derived from the input $\CNF$-formula. The complexity of $\sSAT$ is then studied on inputs where the associated graph belongs to a restricted class of graphs. Samer and Szeider~\cite{SamerS10} were the first to formalize this idea for $\sSAT$ by showing that if this graph is of bounded tree width, then $\sSAT$ can be solved in polynomial time. This result has then been improved and completed by different work showing the tractability of $\sSAT$ for more general or incomparable classes of formulas~\cite{PaulusmaSlivovskySzeider16,SlivovskyS13,SaetherTV14,CapelliDM14}, the intended goal being to understand the frontier of tractibility for $\sSAT$.

\paragraph{Contributions.} The main contribution of this paper is to propose a formal framework, using tools from {\em knowledge compilation}, to study both algorithmic techniques and to compare their respective power. We then make a first step toward the separation of both techniques by exhibiting a class of formulas having the following property: for every formula $F$ of this class, there exists an elimination order of the variables for which exhaustive DPLL returns the number of satisfying assignments of $F$ in linear time while algorithms based on structural restrictions needs exponential time. 

The class of formulas we are using to separate both technique are $\beta$-acyclic formulas, a class already known to be tractable~\cite{BraultCM15}. The algorithm used to solve this class was however very different from the one that are usually used by structure-based algorithms. Our result gives a formal explanation of why the usual techniques fail on this class, a question that has puzzled the community since $\SAT$ has been shown tractable on this class of formulas without generalizing to counting~\cite{OrdyniakPS13}. 

Moreover, in Section~\ref{sec:comp}, we give tools that are useful for designing algorithms on $\beta$-acyclic hypergraphs and are of independent interest.

% In~\cite{SaetherTV14}, S{\ae}ther, Telle and Vatshelle have introduced a parameter of $\CNF$-formulas together with an algorithm solving $\sSAT$ efficiently when this parameter is bounded. In this paper, we will refer to this algorithm as the {\em STV-algorithm}. It was observed in~\cite{BraultCM15} that this algorithm can be used to solve efficiently almost all known tractable classes for $\sSAT$. 

\paragraph{Methodology.} It has been observed that the trace of every implementation of exhaustive DPLL actually constructs a very specific Boolean circuit equivalent to the input formula~\cite{HuangD05}. Such circuits are known in knowledge compilation under the name of {\em decision Decomposable Negation Normal Form} ($\decDNNF$). We first show in Section~\ref{sec:comp} that $\beta$-acyclic formulas can be represented by linear size $\decDNNF$, which can be interpreted as the fact that exhaustive DPLL may solve this class of formula in polynomial time, if it chooses the right order to eliminate variables and the right caching methods.

Similarly, all structure-based algorithms for $\sSAT$ use the same kind of dynamic programming scheme and it has been shown that they all implicitly construct a very specific Boolean circuit equivalent to the input formula~\cite{BovaCMS15}. Such circuits are known under the name of {\em structured deterministic} $\DNNF$. We start by arguing in Section~\ref{sec:prel} that every algorithm using techniques similar to the one used by structure-based algorithms will implicitly construct a circuit having a special property called {\em determinism}. In Section~\ref{sec:deviation}, We exhibit a class of $\beta$-acyclic formulas having no polynomial size equivalent {\em structured $\DNNF$}, thus separating both methods.

\paragraph{Related work.} The class of $\beta$-acyclic formulas we use to prove the separation have already been shown to be tractable for $\sSAT$ and not tractable to the state-of-the-art structure-based algorithms~\cite{BraultCM15} but this result does not rule out the existence of a more general algorithm based on the same technique and solving every known tractable class. Our result is sufficiently strong to rule out the existence of such an algorithm.

New lower bounds have been recently shown for circuits used in knowledge compilation~\cite{BovaCMS16,BeameLRS14,BeameLRS13,BeameL15,DarwicheP10}. Moreover, knowledge compilation has already been used to prove limits of algorithmic techniques in the context of model counting. Beame et al.~\cite{BeameLRS13} for example have exhibited a very interesting class of queries on probabilistic databases that can be answered in polynomial time by using specific techniques but that cannot be represented by circuits corresponding to exhaustive DPLL. They conclude that solving the query by using well-known reduction to $\sSAT$ and then calling a $\sSAT$-solver is weaker than using their technique. Our result uses somehow the same philosophy but on a different algorithmic technique.

% The contributions of this paper are twofold. First, we show that knowledge compilation can also be used to explain the tractability of $\beta$-acyclic formulas by giving a compilation algorithm of such formulas into decision $\DNNF$. This, combined with the results of~\cite{BovaCMS15}, shows that every known structure based algorithm for $\sSAT$ can be seen as a compilation algorithm. This also implies that exhaustive DPLL can be used efficiently on such instances. Finally, we show that knowledge compilation can also be used to explain the differences between the algorithm for the $\beta$-acyclic and the standard technique. We indeed argue that the standard technique based on branch decompositions of graphs naturally induces a tree structure in the way variables are reparted in the compiled form, a property that is known as {\em structuredness} in knowledge compilation. We then prove that $\beta$-acyclic formulas cannot be compiled into structured $\DNNF$, explaining why technique based on branch decomposition are very unlikely to be used on such formulas.

\paragraph{Organization of the paper.} The paper is organized as follows: Section~\ref{sec:prel} contains the needed definitions and concepts used through the paper. Section~\ref{sec:comp} describes the algorithm to transform $\beta$-acyclic formulas into circuits corresponding to the execution of an exhaustive DPLL algorithm. Finally, Section~\ref{sec:deviation} contains the formalization of the framework for studying algorithms based on dynamic programming along a branch decomposition and a proof that the $\beta$-acyclic case is not covered by this framework.

\section{Preliminaries}
\label{sec:prel}

\subsection{$\CNF$-formulas.} A \emph{literal} is a variable $x$ or a negated variable $\neg x$. A \emph{clause} is a finite set of literals. A clause is \emph{tautological} if it contains the same variable negated as well as unnegated. A \emph{(CNF) formula} (or \emph{CNF}, for short) is a finite set of non-tautological clauses. If $x$ is a variable, we let $\var(x) = \var(\neg x) = x$. Given a clause $C$, we denote by $\var(C) = \bigcup_{\ell \in C} \var(\ell)$ and given a $\CNF$-formula, we denote by $\var(F) = \bigcup_{C \in F} \var(C)$. The {\em size} of a $\CNF$-formula $F$, denoted by $\size(F)$, is defined to be $\sum_{C \in F} |\var(C)|$. A $\CNF$-formula is {\em monotone} if it does not contain negative literals.

Let $X$ be a set of variables. An {\em assignment} $\tau$ of $X$ is a mapping from $X$ to $\{0,1\}$. The set of assignments of $X$ is denoted by $\{0,1\}^X$. Given an assignment $\tau$ of $X$ and $X' \subseteq X$, we denote by $\tau|_{X'}$ the restriction of $\tau$ on $X'$. Given two sets $X,X'$, $\tau \in \{0,1\}^X$ and $\tau' \in \{0,1\}^{X'}$, we denote by $\tau \simeq \tau'$ if $\tau|_{X \cap X'} = \tau'|_{X \cap X'}$. If $\tau \simeq \tau'$, we denote by $\tau \cup \tau'$ the assignment of $X \cup X'$ such that for all $x \in X$, $(\tau \cup \tau')(x) = \tau(x)$ and for all $x \in X'$, $(\tau \cup \tau')(x) = \tau'(x)$.  

A {\em boolean function} $f$ on variables $X$ is a mapping from $\{0,1\}^X$ to $\{0,1\}$. We denote by $\tau \models f$ if $\tau \in \{0,1\}^X$ is such that $f(\tau)=1$ and by $\sat(f) = \{\tau \in \{0,1\}^X \mid \tau \models f \}$. Given $Y \subseteq X$ and $\tau \in \{0,1\}^Y$, we denote by $f[\tau]$ the boolean function on variables $X \setminus Y$ defined by for every $\tau' \in \{0,1\}^{X \setminus Y}$, $f[\tau](\tau') = f(\tau \cup \tau')$. 

A $\CNF$-formula $F$ naturally induces a boolean function. Extending assignments to literals in the usual way, we say that an assignment $\tau$ \emph{satisfies} a clause $C$ if there is a literal $\ell \in C$ such that $\tau(\ell) = 1$. An assignment \emph{satisfies} a formula $F$ if it satisfies every clause $C \in F$. In this paper, we often identify the $\CNF$-formula and its underlying boolean function. Thus, given a $\CNF$-formula $F$ on variables $X$ and an assignment $\tau$ of $Y \subseteq X$, we will use the notation $F[\tau]$ in the same way as for any other boolean function. Observe that $F[\tau]$ is still represented by a $\CNF$-formula of size less than $\size(F)$: it is the $\CNF$-formula where we have removed satisfied clauses from $F$ and removed the variables of $Y$ in each remaining clause. 

\subsection{Graphs and branch decompositions.} We assume the reader is familiar with the basics of graph theory. An introduction to the topic can be found in~\cite{Diestel05}. Given a graph $G = (V,E)$, we often denote by $V(G)$ the set of vertices of $G$ and by $E(G)$ the set of edges of $G$ if they have not been named explicitly before. $G$ is said {\em bipartite} if there exists a partition $V_1 \uplus V_2$ of $V$ such that for every $e \in E$, one end-point of $e$ is in $V_1$ and the other is in $V_2$. Given a graph $G = (V,E)$ and $X,Y \subseteq V$, we denote by $G[X,Y] = (V',E')$ the bipartite graph such that $V' = X \cup Y$ and $E' = \{\{u,v\} \in E \mid u \in X, v \in Y\}$.  An {\em induced matching} $M$ is a matching of $G$ such that for every $e,f \in M$, if $e = \{u,v\}$ and $f = \{u',v'\}$, we have  $\{u,u'\} \notin E$, $\{u,v'\} \notin E$, $\{v,u'\} \notin E$ and $\{v,v'\} \notin E$.

A {\em branch decomposition} of $G$ is a binary rooted tree $T$ whose leaves are in one-to-one correspondence with $V$. Given $t$ a vertex of $T$, we denote by $T_t$ the subtree of $T$ rooted in $t$. We denote by $V_t$ the set of leaves of $T_t$. 

The {\em maximal induced matching width}~\cite{Vatshelle12}, $\MIMw$ for short, of a vertex $t$ of $T$ is the size of the largest induced matching $M$ of $G[V \setminus V_t, V_t]$. The $\MIMw$ of $T$, denoted as $\mimw(T)$, is the maximal $\MIMw$ of its vertices. The {\em maximal induced matching width} of a graph $G$, denoted as $\mimw(G)$, is the minimal $\MIMw$ of all branch decomposition of $G$. Figure~\ref{fig:bd} depicts a graph $G$ together with a branch decomposition of $G$. The distinguished node $t$ of this branch decomposition has $\MIMw$ $1$ as the biggest induced matching of $G[V_t,V \setminus V_t]$ is of size one because the matching $\{\{1,4\},\{2,3\}\}$ is not induced.
\begin{figure}
  \centering
  \begin{tikzpicture}[scale=0.6]
    \node (v1) at (-1,-1) {$1$};
    \node (v2) at (1,-1) {$2$};
    \node (v3) at (1,1) {$3$};
    \node (v4) at (-1,1) {$4$};

    \draw (v1) -- (v2) -- (v3) -- (v4) -- (v1) --(v3);
  \end{tikzpicture}~~
  \begin{tikzpicture}[scale=0.6]
    \node (v1) at (-2,-1) {$1$};
    \node (v2) at (-1,-1) {$2$};
    \node (v3) at (0,-1) {$3$};
    \node (v4) at (1,-1) {$4$};
    \node (t1) at (-1.5,0) {$t$};
    \draw (v1) -- (t1) -- (v2); \draw (t1) -- (-0.5,1) -- (0.5,0) -- (v4) -- (0.5,0) --(v3);
  \end{tikzpicture}~~
  \begin{tikzpicture}[scale=0.6]
    \node (v1) at (-1,-1) {$1$};
    \node (v2) at (1,-1) {$2$};
    \node (v3) at (1,1) {$3$};
    \node (v4) at (-1,1) {$4$};

    \draw (v1) -- (v3); \draw (v1) -- (v4); \draw  (v2) --(v3);
  \end{tikzpicture}
  \caption{From left to right: a graph $G=(V,E)$, a branch decomposition of $G$ and $G[V_t, V \setminus V_t]$}
  \label{fig:bd}
\end{figure}
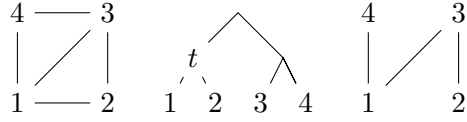

\subsection{Hypergraphs and $\beta$-acyclicity.} A {\em hypergraph} $\calH$ is a finite set of finite sets, called {\em edges}. We denote by $V(\calH) = \bigcup_{e \in \calH} e$ the set of vertices of hypergraph $\calH$.

Most notions on graphs may be naturally generalized to hypergraph. A hypergraph $\calH'$ is a subhypergraph of $\calH$ if $\calH' \subseteq \calH$. Given $S \subseteq V(\calH)$, we denote by $\calH \setminus S = \{e \setminus S \mid e \in \calH\}$. A {\em walk} of length $n$ from edge $e \in \calH$ to $f \in \calH$ is a sequence $(e_1,x_1, \dots, x_n, e_{n+1})$ of vertices and edges such that: $e = e_0$, $f = e_{n+1}$ and for every $i \leq n$, $x_i \in e_i \cap e_{i+1}$. A {\em path} is a walk that never goes twice through the same vertex nor the same edge. It is easy to check that if there is a walk from $e$ to $f$ in $\calH$, then there is also a path from $e$ to $f$.

There exist several generalizations of acyclicity to hypergraph introduced by Fagin~\cite{Fagin83} in the context of database query answer. An extensive presentation of hypergraph acyclicity notions may be found in~\cite{BraultB14}. In this paper, we focus on the $\beta$-acyclicity, which is the most general of such notions for which $\sSAT$ is still tractable. A hypergraph $\calH$ is $\beta$-acyclic if there exists an order $(x_1,\dots,x_n)$ of $V(\calH)$ such that for all $i \leq n$, for all $e,f \in \calH$ such that $x_i \in e \cap f$, then either $e \setminus \{x_1,\dots, x_i\} \subseteq f$ or $f \setminus \{x_1,\dots, x_i\} \subseteq e$. Such an order is called a $\beta$-elimination order of $\calH$. A $\beta$-acyclic hypergraph can be found on Figure~\ref{fig:hexdec}. The order $\{1,2,3,4,5\}$ is an $\beta$-elimination order.

\begin{figure}
  \centering
\begin{tikzpicture}[scale=0.7, every node/.style={scale=0.8}]
	\draw (0,1) ellipse (1.5 and 0.5);
	\node (x5) at (0,1) {$5$};
	\node (x3) at (1,0) {$3$};
	\node (x4) at (1,1) {$4$};
	\node (x2) at (-1,1) {$2$};
	\node (x1) at (-1,0) {$1$};
	\draw (x1) -- (x2) -- (x5) -- (x4) -- (x3);
	\node (H45) at (0,-0.5) {$\mathcal{H}_{e_5}^4 = \calH$};
\end{tikzpicture} ~~~~
\begin{tikzpicture}[scale=0.7, every node/.style={scale=0.8}]
	\draw (0,1) ellipse (1.5 and 0.5);
	\node (x5) at (0,1) {$5$};
%	\node (x3) at (1,0) {$3$};
	\node (x4) at (1,1) {$4$};
	\node (x2) at (-1,1) {$2$};
	\node (x1) at (-1,0) {$1$};
	\draw (x1) -- (x2) -- (x5) -- (x4);
	\node (H45) at (0,-0.5) {$\mathcal{H}_{e_5}^3$};
\end{tikzpicture} ~~~~
\begin{tikzpicture}[scale=0.7, every node/.style={scale=0.8}]
	\node (x3) at (0,0) {$3$};
	\node (x4) at (0,1) {$4$};
	\draw (x4) -- (x3);
	\node (H35) at (0.7,-0.5) {$\mathcal{H}_{e_2}^3$};
\end{tikzpicture}	
  \caption{An example of $\calH_e^x$}
  \label{fig:hexdec}
\end{figure}

Given a hypergraph $\calH$, the incidence graph of $\calH$ is defined as the bipartite graph whose vertices are $V(\calH) \cup \calH$ and there is an edge between $e \in \calH$ and $x \in V(\calH)$ if and only if $x \in e$. The {\em incidence $\MIMw$} of a hypergraph is the $\MIMw$ of its incidence graph. The incidence $\MIMw$ of $\beta$-acyclic hypergraphs can be very large:

\begin{theorem}[\cite{BraultCM15}] %TODO change the ref
  \label{thm:betavsmm} There exists an infinite family of $\beta$-acyclic hypergraphs of incidence $\MIMw$ $\Omega(n)$ where $n$ is the number of vertices of the hypergraph.
\end{theorem}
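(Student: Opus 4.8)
The plan is to exhibit an explicit infinite family $(\calH_k)_{k\ge 1}$ of $\beta$-acyclic hypergraphs with $|V(\calH_k)| = n = \Theta(k)$ and to prove that the incidence graph $\Ginc$ of $\calH_k$ satisfies $\mimw(\Ginc) = \Omega(k)$. The matching upper bound $\mimw(\Ginc) \le n$ is automatic, since an induced matching of $\Ginc$ involves at most $|V(\calH_k)|$ of the vertices of $\calH_k$; so the content is entirely in the lower bound, which I would obtain through the standard ``balanced split'' technique for lower-bounding branch-decomposition width measures. Its ingredients are: (i)~a distinguished set $S \subseteq V(\calH_k) \cup \calH_k$ of vertices of $\Ginc$ with $|S| = \Theta(k)$; (ii)~a \emph{robustness} property of $S$ --- every partition $S = S_1 \uplus S_2$ with $|S_1|, |S_2| \ge |S|/4$ admits an induced matching of the bipartite graph $\Ginc[S_1,S_2]$ of size $\Omega(k)$; and (iii)~a \emph{balancing} observation --- every branch decomposition $T$ of $\Ginc$ has a node $t$ with $|S \cap V_t| \ge |S|/4$ and $|S \setminus V_t| \ge |S|/4$, obtained by descending from the root of $T$ always into the child holding the larger share of $S$ and stopping one step after the last node that still holds more than $|S|/2$ of $S$.

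These combine to finish the lower bound: for the node $t$ furnished by (iii), put $S_1 = S \cap V_t$ and $S_2 = S \setminus V_t$; by (ii) there is an induced matching $M$ of $\Ginc[S_1,S_2]$ of size $\Omega(k)$, and $M$ is still an induced matching of $\Ginc[V(\Ginc) \setminus V_t,\,V_t]$, because all endpoints of $M$ lie in $S$, so the edges of that larger bipartite graph joining two matched vertices are exactly the edges of $\Ginc$ joining a matched vertex of $S_1$ to a matched vertex of $S_2$ --- i.e.\ the edges of $M$ itself. Hence the $\MIMw$ of $t$ is $\Omega(k)$, so $\mimw(T) = \Omega(k)$ for every branch decomposition $T$, which is $\mimw(\Ginc) = \Omega(k)$.

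The real work is to supply $\calH_k$ and $S$ satisfying (i) and (ii). Two remarks guide the search. First, since $\beta$-acyclicity says precisely that the hyperedges through each vertex form a chain once the earlier vertices have been removed, I would lay $\calH_k$ out directly along its intended $\beta$-elimination order, so that membership in the $\beta$-acyclic class is visible by inspection --- each vertex is a nest point at the moment it is deleted. Second, robustness (ii) is easiest to arrange when $\Ginc[S]$ carries induced matchings ``in every direction'': a useful toy model is the incidence graph of a complete graph, i.e.\ the bipartite graph on $[m] \cup \binom{[m]}{2}$ under containment, for which a short case analysis --- distinguishing how the $m$ singletons and the $\binom m2$ pairs fall into $S_1$ and $S_2$, and using that a graph with $\Omega(m^2)$ edges has a matching of size $\Omega(m)$ --- shows that every balanced partition crosses an induced matching of linear size. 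So I would aim for a hypergraph $\calH_k$ whose incidence graph, restricted to a suitable $S$ of size $\Theta(k)$, is a $\beta$-acyclic surrogate of this complete-graph incidence structure.

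The main obstacle is exactly this reconciliation. The incidence graphs of $\beta$-acyclic hypergraphs that escape the structured classes known to have bounded $\MIMw$ (interval, permutation, convex bipartite, and so on) and instead carry robust induced-matching structure are delicate, and the naive candidates are not $\beta$-acyclic: taking the $\binom{[m]}{2}$ pairs as hyperedges, each containing exactly its two endpoints among $m$ fixed vertices, already fails, because the first vertex of any elimination order would have to lie in a chain of hyperedges, whereas two distinct pairs through a common vertex are never nested, so that vertex could be contained in at most one hyperedge --- impossible once $m \ge 3$. One therefore has to ``spend'' auxiliary vertices to break these incomparabilities while retaining enough of the pair-structure inside every balanced cut to still force a matching of size $\Omega(k)$; positioning the auxiliary vertices along the elimination order so that every vertex is a nest point when deleted is what keeps $\calH_k$ $\beta$-acyclic, and checking that the robustness estimate survives the passage from the complete-graph toy model to the actual $\beta$-acyclified construction is where the technical heart of the argument lies.
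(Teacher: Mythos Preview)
This theorem is not proved in the present paper: it is quoted from \cite{BraultCM15} and used as a black box in the proof of Theorem~\ref{thm:betalb}. So there is no ``paper's own proof'' to compare your proposal against here.

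That said, viewed on its own merits your proposal is not yet a proof. The general architecture you describe --- pick a distinguished set $S$, find a balanced node $t$ in any branch decomposition, and exhibit a large induced matching across the resulting cut --- is indeed the standard and correct way to lower-bound $\MIMw$. But you explicitly identify the construction of $\calH_k$ and the verification of the robustness property~(ii) as ``the real work'' and ``the technical heart of the argument,'' and then stop without carrying either out. You correctly observe that the naive candidate (the hypergraph of all $2$-subsets of $[m]$) is not $\beta$-acyclic, and you gesture at repairing it by inserting auxiliary vertices along an elimination order, but you give no actual family, no verification of $\beta$-acyclicity, and no proof that every balanced cut of the incidence graph crosses a large induced matching. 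Until those steps are written down, what you have is a plausible plan of attack rather than a proof; the entire difficulty of the theorem lies precisely in the construction you have left unspecified.
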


\subsection{Structure of formulas.} Let $F$ be a $\CNF$-formula. The {\em incidence graph} of $F$, denoted by $\Ginc(F)$, is the bipartite graph whose vertices are the variables and the clauses of $F$ and there is an edge between a variable $x$ and a clause $C$ if and only if $x \in \var(C)$. The {\em incidence $\MIMw$} of a formula $F$ is the $\MIMw$ of $\Ginc(F)$. The hypergraph of $F$, denoted by $\calH(F)$, is defined as $\calH(F) = \{\var(C) \mid C \in F\}$. A $\CNF$-formula is said to be {\em $\beta$-acyclic} if and only if its hypergraph is $\beta$-acyclic.

\subsection{Knowledge compilation} 

\paragraph{DNNF.}In this paper we focus on so-called $\DNNF$ introduced by Darwiche~\cite{Darwich01}. An extensive presentation of different target languages with their properties may be found in \cite{DarwicheM2002}. A Boolean circuit $C$ on variables $X$ is in {\em Negation Normal Form}, $\NNF$ for short, if its input are labeled by literals on $X$ and its internal gates are labeled with either a $\wedge$-gate or a $\vee$-gate. We assume that such circuit has a distinguished gate called the {\em output}. An $\NNF$ circuit $D$ computes the boolean function computed by its output gate and we will often identify the circuits and its computed Boolean function. We denote by $\size(D)$ the number of gates of $D$ and by $\var(D)$ the set of variables labeling its input. If $v$ is a gate of $D$, we denote by $D_v$ the circuit given by the maximal the sub-circuits of $D$ rooted in $v$ and whose output is $v$. If $v$ is an $\wedge$-gate, it is said {\em decomposable} if for every $v_1,v_2$ that are distinct inputs of $v$, it holds that $\var(D_{v_1}) \cap \var(D_{v_2}) = \emptyset$. An $\NNF$ circuit is in {\em Decomposable Normal Form} if all its $\wedge$-gates are decomposable. We will refer to such circuits as $\DNNF$. It is easy to see that one can find a satisfying assignment of a $\DNNF$ $D$ in time $O(\size(D))$. Moreover, if $D$ is a $\DNNF$ on variables $X$, $Y \subseteq X$ and $\tau \in \{0,1\}^Y$, then $D[\tau]$ is computed by a $\DNNF$ smaller than $D$ since we can plug the values of literals in $Y$ in the circuit $D$. 

\paragraph{Deterministic and Decision DNNF.} Let $D$ be a $\DNNF$. An $\vee$-gate in $D$ is called {\em deterministic} if for every $v_1,v_2$ that are distinct inputs of $v$, it holds that $D_{v_1} \land D_{v_2} \equiv 0$. $D$ is said deterministic if all its $\vee$-gates are deterministic. Observe that determinism is a semantic condition and is hard to decide from the $\DNNF$ only. In this paper, we will be mostly interested in {\em decision} gates that are a special case of deterministic gates. An $\vee$-gate $v$ of $D$ is a {\em decision gate} if it is binary and if there exists a variable $x$ and two gates  $v_1, v_2$ of $D$ such that $v$ is of the form $(x \land v_1) \lor (\neg x \land v_2)$. A {\em decision $\DNNF$}, $\decDNNF$ for short, is a $\DNNF$ for which every $\vee$-gate is a decision gate. It is easy to see that a $\decDNNF$ is deterministic. Figure~\ref{fig:dnnf} depicts a $\DNNF$. The output is represented by a square and the $\DNNF$ computes the boolean function $(\neg x \land z) \lor (x \land (y \lor z))$. It is easy to check that both $\land$-gates are decomposable. The output gate is also a decision gate on variable $x$. The other $\vee$-gate is not deterministic since the boolean function $y \land z$ is satisfiable.

\begin{figure}
  \centering
  \begin{tikzpicture}[scale=0.5, every node/.style={scale=0.9}]
\node[draw] (r) at (0,0) {$\vee$};
\node (a1) at (-1,-1) {$\wedge$};
\node (a2) at (1,-1) {$\wedge$};
\node (r2) at (0,-2) {$\vee$};

\node (x) at (-2,-2) {$x$};
\node (nx) at (2,-2) {$\neg x$};

\draw (x) -- (a1);
\draw (nx) -- (a2);

\node (y) at (-1,-3) {$y$};
\node (z) at (1,-3) {$z$};
\draw (y) -- (r2) -- (z);
\draw (r) -- (a1) -- (r2); \draw (z) -- (a2) -- (r);
\end{tikzpicture}
  \begin{tikzpicture}[scale=0.5]
    \node (y) at (-1,-1) {$y$};
    \node (z) at (1,-1) {$z$};
    \node (x) at (3,-1) {$x$};
    \draw (y) -- (0,0) -- (z);
    \draw (0,0) -- (1,1) -- (x);
  \end{tikzpicture}
  \caption{A $\DNNF$ and a vtree}
  \label{fig:dnnf}
\end{figure}
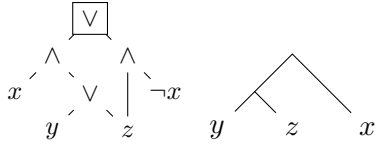

\paragraph{Structuredness.} Structuredness is a constraint on the way variables can be partitioned by a $\DNNF$. It may be seen as a generalization to trees of the variable ordering that is sometimes required in data structures such as OBDD~\cite{Wegener00} and was introduced in~\cite{PipatsrisawatD08}. Let $D$ be a $\DNNF$ on variables $X$. A {\em vtree} $T$ on $X$ is a binary tree whose leaves are in one-to-one correspondence with $X$. An $\land$-gate $v$ of $D$ {\em respects} a vertex $t$ of $T$ if it has exactly two inputs $v_1,v_2$ and if $\var(D_{v_1}) \subseteq X_{t_1}$ and  $\var(D_{v_2}) \subseteq X_{t_2}$ where $t_1,t_2$ are the children of $t$ in $T$ and $X_{t_1}$ (resp. $X_{t_2}$) is the set of variables that appears in the leaves of $T_{t_1}$ (resp. $T_{t_2}$). A $\DNNF$ $D$ respects a vtree $T$ if for every $\land$-gate $v$ of $D$, there exists a vertex $t$ of $T$ such that $v$ respects $t$. A $\DNNF$ $D$ is {\em structured} if there exists a vtree $T$ such that $D$ respects $T$. It can be checked that the $\DNNF$ depicted in Figure~\ref{fig:dnnf} respects the vtree given on the same figure.

\subsection{Structuredness and Branch Decomposition}
\label{sec:sbd}

In this section, we explain how most of the structure-based algorithms for $\sSAT$ work and how we can relate this to the fact that they are implicitly constructing a structured $\DNNF$ equivalent to the input formula.

The current techniques for solving $\sSAT$ by exploiting the structure of the input are all based on the same technique: they start by computing a ``good'' branch decomposition $T$ of the incidence graph of the formula $F$. Each vertex $t$ of the branch decomposition is then used to define a sub-formula $F_t$ and partial assignments $a_1,\dots,a_k$ of its variables. The number of solutions of $F_t[a_i]$ is then computed by dynamic programming along the branch decomposition in a bottom-up fashion. In all algorithms, the variables of $F_t$ are the variables of $F$ that label the leaves of $T_t$. The number of solutions of $F_t$ on some partial assignment $a_i$ is computed by multiplying and summing the number of solutions of $F_{t_1}$ and of $F_{t_2}$, where $t_1,t_2$ are the children of $t$ on restrictions of $a_i$ to the variables of $F_{t_1}$ and $F_{t_2}$ respectively. Those multiplications can be seen as a decomposable $\land$-gate and the sums can be seen as deterministic $\lor$-gates. Thus, the underlying $\DNNF$ constructed by those algorithms is naturally structured along the vtree obtained from the branch decomposition $T$ by forgetting the leaves that are labeled by clauses of the formula.

In this paper, we will thus say that a class of formula can be solved by using the {\em standard (dynamic programming) technique} if it can be compiled into deterministic structured $\DNNF$. The most general known algorithm exploiting the structure of the input, that we will call, from the author names, the {\em STV-algorithm}~\cite{SaetherTV14}, uses exactly this technique. It has been observed in~\cite{BovaCMS15} that this algorithm is actually implicitly constructing a deterministic structured $\DNNF$ equivalent to the input $\CNF$-formula, which reinforces the idea that the notion of structuredness captures the essence of the standard technique for solving $\sSAT$.  % In the rest of this section, we prove that $\beta$-acyclic formulas cannot be compiled into polynomial size structured $\DNNF$. We prove more precisely:

\section{Compilation of $\beta$-acyclic formulas into $\decDNNF$}
\label{sec:comp}

We show how to construct a linear size $\decDNNF$ equivalent to a given $\beta$-acyclic formula $F$ (Theorem~\ref{thm:compileexists}). % Following in observation of Darwiche and Huang\cite{HuangD05}, this result shows that exhaustive DPLL 
We use a dynamic programming approach by iteratively constructing $\decDNNF$ for subformulas of $F$. These subformulas are defined using general remarks on the structure of $\beta$-acyclic hypergraphs. % Theorem~\ref{thm:compileexists} provides a simpler proof of the tractability of $\sSAT$ on $\beta$-acyclic formulas~\cite{BraultCM15}.

\subsection{Structure of $\beta$-acyclic hypergraphs.}
\label{sec:structbh}
In this section, we fix a $\beta$-acyclic hypergraph $\calH$ with $n$ vertices and a $\beta$-elimination order $(x_1,\dots,x_n)$ of its vertices denoted by $<$. We denote by $<_\calH$ the order on $\calH$ defined as the lexicographical order on $\calH$ where $e \in \calH$ is seen as the $\{0,1\}^n$-vector $\vec{e}$ such that $\vec{e}_i = 1$ if $x_{n-i} \in e$ and $\vec{e}_i = 0$ otherwise. In other words, $e <_\calH f$ if and only if $\max (e \Delta f) \in f$.

From these orders, we construct a family of subhypergraphs of $\calH$ which will be interesting for us later. Let $x \in V$ and $e \in \calH$. We denote by $\Vle{x} = \{y \in V \mid y \leq x\}$. $\Vlt{x}$, $\Vge{x}$ and $\Vgt{x}$ are defined similarly. We denote by $\calH_e^x$ the subhypergraph of $\calH$ that contains the edges $f \in \calH$ such that there is a walk from $f$ to $e$ that goes only through edges smaller than $e$ and vertices smaller than $x$. 

Observe that, by definition, $\calH_e^x$ is a connected subhypergraph of $\calH$, with $e \in \calH_e^x$ and for all $f \in \calH_e^x$, $f \leq_\calH e$. Observe also that even if there is a walk from $f \in \calH_e^x$ to $e$ that goes only through vertices smaller than $x$, $f$ may hold vertices that are bigger than $x$. We insist on the fact that the whole edge $f$ is in $\calH_e^x$ and not only its restriction to $\Vle{x}$. 

We start by giving an example. Let $\calH = \{\{1,2\}, \{3,4\}, \{2,5\}, \{4,5\}, \{2,4,5\}\}$ be the hypergraph depicted on Figure~\ref{fig:hexdec}. One can easily check that $1 < 2 < 3 < 4 < 5$ is a $\beta$-elimination order and that the order $<_\calH$ is the following: $e_1 = \{1,2\} <_\calH e_2 = \{3,4\} <_\calH e_3 = \{2,5\} <_\calH e_4 = \{4,5\} <_\calH e_5 = \{2,4,5\}$. $\calH_{e_5}^4$ is the whole hypergraph since one can reach any edge from $e_5$ by going through vertices smaller than $4$. $\calH_{e_5}^3$ however is lacking the edge $e_2 = \{3,4\}$ since the only way of reaching $e_2$ from $e_5$ is to go through the vertex $4$ which is not allowed.

Observe that these subhypergraphs are naturally ordered by inclusion:
\begin{lemma}
  \label{lem:hexinc} Let $x,y \in V(\calH)$ such that $x \leq y$ and $e,f \in \calH$ such that $e \leq_\calH f$ and $V(\calH_e^x) \cap V(\calH_f^y) \cap \Vle{x} \neq \emptyset$. Then $\calH_e^x \subseteq \calH_f^y$. In particular, for all $y$, if $e \in \calH_f^y$ then $\calH_e^y \subseteq \calH_f^y$.
\end{lemma}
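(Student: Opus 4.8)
The plan is to prove both parts by concatenating \emph{witnessing walks}. Call a walk from $g$ to $e'$ a \emph{witness} for $g \in \calH_{e'}^{z}$ if every edge on it is $\le_\calH e'$ and every vertex it traverses lies in $\Vle{z}$; by definition, $g \in \calH_{e'}^{z}$ exactly when such a witness exists. Two trivial monotonicity remarks are used throughout: (i) if all edges of a walk are $\le_\calH e$ and $e \le_\calH f$, then all of them are $\le_\calH f$; and (ii) if all vertices traversed by a walk lie in $\Vle{x}$ and $x \le y$, then they all lie in $\Vle{y}$. (I read ``vertices smaller than $x$'' as ``vertices in $\Vle{x}$'', matching the hypothesis $\cap\, \Vle{x}$ in the statement; under the strict reading one replaces each walk below by a path, which is harmless since every walk can be shortened to a path.)

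The first step is to establish that $e \in \calH_f^y$. Fix a vertex $v \in V(\calH_e^x) \cap V(\calH_f^y) \cap \Vle{x}$, and pick edges $g \in \calH_e^x$ and $h \in \calH_f^y$ with $v \in g \cap h$. Let $W$ be a witness for $g \in \calH_e^x$ (a walk from $g$ to $e$) and $W'$ a witness for $h \in \calH_f^y$ (a walk from $h$ to $f$). Let $\overline{W}$ be $W$ traversed backwards, a walk from $e$ to $g$, and let $P$ be the walk from $e$ to $f$ obtained by following $\overline{W}$, then the single step $(g,v,h)$ (legal as $v \in g \cap h$), then $W'$. Each edge of $P$ is either an edge of $W$, hence $\le_\calH e$ and so $\le_\calH f$ by (i) and $e \le_\calH f$, or is $h$ or an edge of $W'$, hence $\le_\calH f$; so every edge of $P$ is $\le_\calH f$. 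Each vertex traversed by $P$ is traversed by $W$, hence in $\Vle{x} \subseteq \Vle{y}$ by (ii), or is $v \in \Vle{x} \subseteq \Vle{y}$, or is traversed by $W'$, hence in $\Vle{y}$; so they all lie in $\Vle{y}$. Therefore $P$ is a witness for $e \in \calH_f^y$.

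For the inclusion, take an arbitrary $g' \in \calH_e^x$ together with a witness $W''$, i.e.\ a walk from $g'$ to $e$ whose edges are $\le_\calH e$ and whose traversed vertices lie in $\Vle{x}$. Concatenating $W''$ with the walk $P$ gives a walk from $g'$ to $f$ whose edges are all $\le_\calH f$ (those from $W''$ by (i) and $e \le_\calH f$, the others as in $P$) and whose traversed vertices all lie in $\Vle{y}$ (those from $W''$ by (ii) and $x \le y$, the others as in $P$); this witnesses $g' \in \calH_f^y$, proving $\calH_e^x \subseteq \calH_f^y$. The ``in particular'' clause is this last paragraph with the first step skipped: if $e \in \calH_f^y$ is given, then $e \le_\calH f$ (all edges of $\calH_f^y$ are $\le_\calH f$), and for any $g' \in \calH_e^y$, concatenating a witness for $g'$ with a witness for $e \in \calH_f^y$ directly gives $g' \in \calH_f^y$. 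I do not expect a genuine obstacle; the one point to be careful about is to build $P$ from an actual \emph{witness} of $g \in \calH_e^x$ rather than from some walk inside the connected hypergraph $\calH_e^x$, since two edges of $\calH_e^x$ may share a vertex lying outside $\Vle{x}$, which would break the vertex bound.
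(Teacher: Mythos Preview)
Your proof is correct and follows essentially the same approach as the paper's: pick a common vertex $z\in\Vle{x}$ together with edges $g_1\in\calH_e^x$, $g_2\in\calH_f^y$ containing it, concatenate witnessing walks through $z$ to obtain a walk certifying $e\in\calH_f^y$, and then extend by a further witnessing walk to reach an arbitrary $h\in\calH_e^x$. The only differences are cosmetic (direction of the walks, and your explicit statement of the two monotonicity facts); the paper's proof is the same argument in fewer words.
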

\begin{proof}
  Let $z \in V(\calH_e^x) \cap V(\calH_f^y) \cap \Vle{x}$ and let $g_1 \in \calH_e^x$ and $g_2 \in \calH_f^y$ be such that $z \in g_1 \cap g_2$. By definition, there exists a walk $\calP_1$ from $f$ to $g_2$ going through vertices smaller than $y$ and edges smaller than $f$ and a walk $\calP_2$ from $g_1$ to $e$ going through vertices smaller than $x$ and edges smaller than $e$. Since $z \leq x \leq y$ and $e \leq_\calH f$, $\calP = (\calP_1, z, \calP_2)$ is a walk from $f$ to $e$ going through edges smaller than $f$ and vertices smaller than $y$, that is $e \in \calH_f^y$. Now let $h \in \calH_e^x$ and let $\calP_3$ be a path from $e$ to $h$ going through vertices smaller than $x$ and edges smaller than $e$. Then $(\calP,\calP_3)$ is a walk from $f$ to $h$ going through vertices smaller than $y$ and edges smaller than $f$. That is $h \in \calH_f^y$, so $\calH_e^x \subseteq \calH_f^y$.
\end{proof}

We now state the main result of this section. Theorem~\ref{thm:hexvar} relates the variables of  $\calH_e^x$ to those of $\Vge{x}$ and $e$. This is crucial for the dynamic programming scheme of our algorithm:
\begin{theorem}
  \label{thm:hexvar} For every $x \in V$ and $e \in \calH$, $V(\calH_e^x) \cap \Vge{x} \subseteq e$.
\end{theorem}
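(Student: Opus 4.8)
The plan is to unfold the definition of $\calH_e^x$ and then induct on the length of walks. Recall that $f \in \calH_e^x$ means there is a walk $(g_0, z_1, g_1, \dots, z_m, g_m)$ with $g_0 = f$, $g_m = e$, all $g_i \leq_\calH e$ and all $z_i \in \Vle{x}$ (and that such an $f$ automatically satisfies $f \leq_\calH e$, as observed after the definition). So it suffices to prove: for every such walk and every $w \in g_0$ with $w \geq x$, we have $w \in e$. I would prove this by (strong) induction on $m$; equivalently, take a counterexample with $m$ minimal, i.e.\ a walk as above together with $w \in g_0$, $w \geq x$, $w \notin e$.

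The base case $m=0$ is immediate. For $m \geq 1$: since $w \in g_0 \setminus e$ we have $g_0 \neq e$, hence $g_0 <_\calH e$, which by definition of $<_\calH$ says $\max(g_0 \Delta e) \in e$; this is the only place the lexicographic edge order is used, and it is used at the very end. First, $w \notin g_1$, for otherwise the suffix $(g_1, z_2, \dots, g_m = e)$ is a strictly shorter counterexample. Since $z_1 \in g_1$ this gives $w \neq z_1$, and with $w \geq x \geq z_1$ we get $w > z_1$. Now apply the $\beta$-elimination property at $z_1 \in g_0 \cap g_1$: the option $g_0 \setminus \Vle{z_1} \subseteq g_1$ would force $w \in g_1$, so we are left with $g_1 \setminus \Vle{z_1} \subseteq g_0$.

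The technical heart is then a ``scanning'' lemma, proved by a second induction on walk length: if $(h_0, y_1, h_1, \dots, y_\ell, h_\ell = e)$ is a walk with $\ell \geq 1$, all $y_i \in \Vle{x}$, all $h_i \leq_\calH e$ and $h_1 \setminus \Vle{y_1} \subseteq h_0$, then either there is a strictly shorter walk of the same kind from $h_0$ to $e$, or $e \setminus \Vle{y_1} \subseteq h_0$. For $\ell = 1$ the hypothesis already says $e \setminus \Vle{y_1} \subseteq h_0$. For $\ell \geq 2$, inspect $y_2 \in h_1 \cap h_2$: if $y_2 \geq y_1$ then one checks that $y_1$ or $y_2$ lies in $h_0 \cap h_2$, giving a length-$(\ell-1)$ walk from $h_0$; otherwise the $\beta$-elimination property at $y_2$ either yields $y_1 \in h_2$ (again a shortcut) or yields $h_2 \setminus \Vle{y_2} \subseteq h_1$, and then, since $y_2 < y_1$, also $h_2 \setminus \Vle{y_1} \subseteq h_1 \setminus \Vle{y_1} \subseteq h_0$; in that last case apply the induction hypothesis to the walk $(h_1, y_2, \dots, h_\ell = e)$ (which satisfies the same hypothesis with $y_2$ in place of $y_1$) and either prepend $(h_0, y_1, h_1)$ to the shorter walk it produces, or lift its conclusion $e \setminus \Vle{y_2} \subseteq h_1$ to $e \setminus \Vle{y_1} \subseteq h_0$ exactly as before.

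Finally, combine the two. Applying the scanning lemma to our walk (its hypothesis is the conclusion of the second paragraph) either gives a strictly shorter walk from $g_0$ to $e$ -- contradicting minimality, since $w$ still witnesses a counterexample along it -- or gives $e \setminus \Vle{z_1} \subseteq g_0$, i.e.\ $e \setminus g_0 \subseteq \Vle{z_1}$; then $\max(g_0 \Delta e) \in e \setminus g_0 \subseteq \Vle{z_1}$ forces $\max(g_0 \Delta e) \leq z_1$, hence $g_0 \setminus e \subseteq \Vle{z_1}$ as well, and since $w \in g_0$ with $w > z_1$ this gives $w \in e$, contradicting $w \notin e$. The main obstacle is precisely the scanning lemma: $\beta$-elimination lets one push an edge's ``large'' vertices into a neighbour in only one of the two directions, so when it goes the ``wrong'' way a single walk-shortening step is not available and one must recurse along the walk while maintaining the containment $h_i \setminus \Vle{y_1} \subseteq h_0$; checking that this recursion always terminates either in a shortcut or in $e \setminus \Vle{y_1} \subseteq h_0$ is the crux of the argument.
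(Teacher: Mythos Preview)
Your argument is correct, but it takes a genuinely different route from the paper's.

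The paper proceeds in two stages. First it proves a normalisation lemma (Lemma~\ref{lem:pathdecrease}): any walk witnessing $f\in\calH_e^x$ can be replaced by a \emph{decreasing} path, one in which both the edges (for $<_\calH$) and the intermediate vertices strictly decrease. This is shown by taking a shortest admissible path and arguing, via the auxiliary Lemma~\ref{lem:ordervar} (if $e<_\calH f$ and $x\in e\cap f$ then $e\cap\Vge{x}\subseteq f$), that any violation of the decreasing condition would produce a shortcut. Second, the theorem follows by a straightforward induction along a decreasing path, applying Lemma~\ref{lem:ordervar} at each step. So the edge order $<_\calH$ is used throughout, and the structural content is packaged into the existence of decreasing paths.

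You instead work with arbitrary walks and never normalise them. Your core device, the ``scanning lemma'', directly propagates the containment $h_{i+1}\setminus\Vle{y_{i+1}}\subseteq h_i$ along the walk whenever $\beta$-elimination goes the inconvenient way, producing either a shortcut or eventually $e\setminus\Vle{y_1}\subseteq h_0$. The edge order $<_\calH$ is then invoked exactly once, at the end, to turn $e\setminus g_0\subseteq\Vle{z_1}$ into $g_0\setminus e\subseteq\Vle{z_1}$. This is self-contained and avoids the separate decreasing-path lemma, at the price of a slightly more intricate nested induction; conversely, the paper's decreasing-path lemma is a clean reusable structural statement about $\beta$-acyclic hypergraphs that your argument does not isolate. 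One minor remark: your derivation of $h_2\setminus\Vle{y_1}\subseteq h_0$ in the scanning lemma is never used afterwards (the induction on the suffix and the ``lift'' step suffice), so it can be dropped.
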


In order to prove Theorem~\ref{thm:hexvar}, we need two easy intermediate lemmas:
\begin{lemma}
  \label{lem:ordervar} Let $e,f \in \calH$ such that there exists $x \in e \cap f$. If $e <_\calH f$ then $e \cap \Vge{x} \subseteq f$.
\end{lemma}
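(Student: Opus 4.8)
The statement is: if $e, f \in \calH$ share a vertex $x$ and $e <_\calH f$, then $e \cap \Vge{x} \subseteq f$. The plan is to unfold the definition of $<_\calH$ and exploit the $\beta$-elimination property at a carefully chosen vertex. Recall $e <_\calH f$ means that the maximum (under the elimination order $<$) of the symmetric difference $e \Delta f$ lies in $f$; equivalently, if we let $z = \max(e \Delta f)$, then $z \in f \setminus e$, and every vertex strictly above $z$ belongs to $e$ if and only if it belongs to $f$.

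First I would take an arbitrary $y \in e \cap \Vge{x}$ and aim to show $y \in f$. The case $y \in e \cap f$ is immediate, so assume toward a contradiction that $y \in e \setminus f$. Then $y \in e \Delta f$, so $y \leq z := \max(e \Delta f)$, and since $z \in f \setminus e$ while $y \in e \setminus f$ we have $y \neq z$, hence $y < z$. Now $x \leq y < z$, so in particular $x < z$. The key move is to apply the $\beta$-elimination property at the vertex $x$: since $x \in e \cap f$, either $e \setminus \Vle{x} \subseteq f$ or $f \setminus \Vle{x} \subseteq e$. In the first case $y \in e \setminus \Vle{x}$ (because $x \leq y$, and $y \neq x$ would give $y \in f$ trivially; if $y = x$ then $x \in f$ and we are done) forces $y \in f$, a contradiction. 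So we must be in the second case: $f \setminus \Vle{x} \subseteq e$. But then $z \in f$ and $x < z$ give $z \in f \setminus \Vle{x} \subseteq e$, contradicting $z \in f \setminus e$. Either way we reach a contradiction, so $y \in f$, proving $e \cap \Vge{x} \subseteq f$.

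The main obstacle is the borderline handling of the vertex $x$ itself (is $y = x$ or $x < y$?) and making sure the $\beta$-elimination dichotomy is applied with the correct ``cut'' $\Vle{x}$ versus $\Vlt{x}$; one should double-check whether the definition of $\beta$-elimination order in the paper uses $e \setminus \{x_1,\dots,x_i\}$ (closed, i.e.\ $\Vle{x_i}$) and align the argument accordingly — here $x$ plays the role of the eliminated vertex $x_i$, so $e \setminus \{x_1, \dots, x_i\} = e \setminus \Vle{x} = e \cap \Vgt{x}$, and one argues on $\Vgt{x}$ rather than $\Vge{x}$, treating $y = x$ as a trivial sub-case. Apart from that bookkeeping, the argument is a direct two-case analysis with no real depth.
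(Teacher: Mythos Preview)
Your proposal is correct and uses essentially the same ingredients as the paper's proof: the $\beta$-elimination dichotomy at $x$ together with the fact that $m := \max(e \Delta f) \in f$. The only difference is organizational---the paper argues directly (the dichotomy yields either $e \cap \Vge{x} \subseteq f \cap \Vge{x}$ or the reverse, and the location of $m$ selects the correct inclusion), while you argue elementwise by contradiction; this is cosmetic, and your own closing remark already identifies the one bookkeeping point (handling $y = x$ separately so that the dichotomy on $\Vgt{x}$ applies cleanly).
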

\begin{proof}
By definition of $\beta$-acyclic elimination order, we must have either $e \cap \Vge{x} \subseteq f \cap \Vge{x}$ or $f \cap \Vge{x} \subseteq e \cap \Vge{x}$. Now since $e <_\calH f$, we have $m = \max(e \Delta f) \in f$. If $m \leq x$, we have $e \cap \Vge{x} = f \cap \Vge{x}$. Otherwise, we have $e \cap \Vge{x} \subseteq f \cap \Vge{x}$ since $m \in (f \setminus e) \cap \Vge{x}$.
\end{proof}

A path $\calP = (e_0,x_0,\dots,x_{n-1},e_n)$ in $\calH$ is called {\em decreasing} if for all $i$, $e_i >_\calH e_{i+1}$ and $x_i > x_{i+1}$.
\begin{lemma}
  \label{lem:pathdecrease} For every $x \in V$, $e \in \calH$ and $f \in \calH_e^x$, there exists a decreasing path from $e$ to $f$ going through vertices smaller than $x$.
\end{lemma}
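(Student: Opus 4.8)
The plan is to prove Lemma~\ref{lem:pathdecrease} by induction on the length of a path from $e$ to $f$ witnessing that $f \in \calH_e^x$, using Lemma~\ref{lem:ordervar} to ``repair'' non-decreasing steps. By definition of $\calH_e^x$, there is a walk --- hence a path --- $\calP = (e_0, x_0, \dots, x_{m-1}, e_m)$ from $e = e_0$ to $f = e_m$ going only through edges $e_i <_\calH e$ (for $i \geq 1$, and $e_0 = e$) and vertices $x_i < x$. The goal is to show that among all such paths from $e$ to $f$ we can choose one that is decreasing, i.e.\ $e_i >_\calH e_{i+1}$ and $x_i > x_{i+1}$ for all $i$.

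First I would record a normalization: along a path through $\calH_e^x$ every edge visited after $e_0$ is $<_\calH e$, so in particular $e_0 = e$ is the $<_\calH$-largest edge on the path. The core step is a local exchange argument. Suppose the path is not decreasing, and pick the smallest index $i$ where either $e_i <_\calH e_{i+1}$ or $x_{i-1} \leq x_i$ (handling the two monotonicities; note $x_i \in e_i \cap e_{i+1}$ and $x_{i-1} \in e_{i-1} \cap e_i$ by definition of a walk). I would argue that the $<_\calH$-order and the vertex order along the path can be made to descend simultaneously: the key is that both $x_{i-1}$ and $x_i$ lie in $e_i$, and we want to compare them. If $x_i > x_{i-1}$ were to happen with $x_i \in e_i \cap e_{i+1}$, then applying Lemma~\ref{lem:ordervar} to the pair $e_i, e_{i+1}$ at vertex $x_i$ (after ensuring the right orientation of $<_\calH$) forces an inclusion of the $\Vge{x_i}$-parts; since $x_{i-1} \in e_i$ and $x_{i-1} < x_i$ is excluded in that sub-case, one can shortcut the path, replacing a non-decreasing detour by a shorter sub-walk that still lives in $\calH_e^x$, and then invoke the induction hypothesis on the strictly shorter path. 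The monotonicity of the $x_i$ and the monotonicity of the $e_i$ have to be threaded together: intuitively, each time we pass through vertex $x_i$ the edge ``shrinks'' on $\Vge{x_i}$ by $\beta$-acyclicity, so $x_i \mapsto e_i$ cannot increase in $<_\calH$ while $x_i$ decreases, and conversely a failure of one monotonicity produces a shortcut that contradicts minimality of the path length.

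Concretely the induction is on $m$, the length of the path. For $m = 0$ there is nothing to prove ($e = f$). For $m \geq 1$: consider the edge $e_1$ with $x_0 \in e_0 \cap e_1$ and $e_1 <_\calH e_0 = e$. I would first show $\calH_{e_1}^x \subseteq \calH_e^x$ — this follows from Lemma~\ref{lem:hexinc} since $e_1 \in \calH_e^x$. Then $f \in \calH_{e_1}^x$ via the sub-path $(e_1, x_1, \dots, e_m)$, which has length $m-1$, so by induction there is a decreasing path $\calP'$ from $e_1$ to $f$ through vertices $< x$. It remains to prepend the step $(e, x_0, e_1, \dots)$ and argue it can be made decreasing: we need $e >_\calH e_1$ (true) and $x_0 > x_1'$ where $x_1'$ is the first vertex of $\calP'$. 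If this already holds we are done; if $x_0 \leq x_1'$, then since $x_0 \in e \cap e_1$ and $x_1' \in e_1 \cap (\text{next edge of } \calP')$ with $e_1$ being $<_\calH$-larger on $\calP'$, Lemma~\ref{lem:ordervar} applied at $x_1'$ (with $x_0 \le x_1'$, so $x_0 \in \Vge{x_1'}$) puts $x_0$ into the next edge of $\calP'$, letting us delete $e_1$ from the path entirely and obtain a strictly shorter path from $e$ to $f$ in $\calH_e^x$, to which we reapply the induction hypothesis.

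The main obstacle I expect is the bookkeeping that makes the two monotonicities descend \emph{simultaneously} and that guarantees every shortcut stays inside $\calH_e^x$ (all edges still $<_\calH e$, all vertices still $< x$) — the first is immediate because shortcutting only removes edges, and the removed vertices were already $< x$; the edge bound is also immediate since we never introduce new edges. The genuinely delicate point is the orientation of $<_\calH$ when invoking Lemma~\ref{lem:ordervar}: one must be careful that along a decreasing path the edge is always the \emph{smaller} of the consecutive pair in the direction away from $e$, so that the inclusion handed back by the lemma is the useful one ($e_i \cap \Vge{x} \subseteq e_{i+1}$, pushing the earlier shared vertex forward). Once that orientation is pinned down, each failure of decrease yields a length decrease, and the induction closes.
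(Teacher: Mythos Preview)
Your inductive scheme has a genuine gap at the step where you claim $f \in \calH_{e_1}^x$ ``via the sub-path $(e_1,x_1,\dots,e_m)$''. Membership in $\calH_{e_1}^x$ requires a walk from $e_1$ to $f$ through edges $\le_\calH e_1$, but the sub-path only guarantees that $e_2,\dots,e_m$ are $<_\calH e$, not $\le_\calH e_1$. In fact $f$ need not lie in $\calH_{e_1}^x$ at all: take a path $(e_0,x_0,e_1,x_1,e_2)$ with $e_0 >_\calH e_2 >_\calH e_1$; then $f=e_2 >_\calH e_1$, so $f\notin \calH_{e_1}^x$. Hence you cannot invoke the induction hypothesis with base edge $e_1$, and the whole recursive construction of $\calP'$ collapses. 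This is not a bookkeeping issue but a structural one: the ``next edge'' along an arbitrary witnessing path is not a valid new base for the lemma.

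There is also a secondary slip in your shortcut argument: from $x_0 \le x_1'$ you write ``so $x_0 \in \Vge{x_1'}$'', which is the wrong direction. The correct move is to apply Lemma~\ref{lem:ordervar} at $x_0 \in e_0\cap e_1$ with $e_1 <_\calH e_0$, giving $e_1 \cap \Vge{x_0} \subseteq e_0$; then $x_1' \ge x_0$ and $x_1'\in e_1$ force $x_1'\in e_0$, so the shortcut is $(e,x_1',e_2',\dots)$, not one that pushes $x_0$ forward. Even with this fix, however, you have no control on the length of $\calP'$ relative to $m$, so the ``strictly shorter'' claim is unjustified and the induction does not close. The paper sidesteps all of this by taking a \emph{shortest} witnessing path from the outset and showing directly, via the first index where monotonicity fails, that non-decreasingness contradicts minimality (using the shortest-path property that $x_i\notin e_j$ for $j\notin\{i,i+1\}$). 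Your second paragraph gestures at this idea; committing to it, rather than to the induction of the third paragraph, is the way to make the argument work.
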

% \begin{proof}[Proof (sketch).]
%   It can be proven that the shortest path from $e$ to $f$ going through vertices smaller than $x$ is actually decreasing. Indeed, if there is a point if the path increase somewhere in the sequence, then we can use Lemma~\ref{lem:ordervar} to shorten it.
% \end{proof}
\begin{proof}
  By definition of $\calH_e^x$, there exists a path $\calP = (e_0,x_0,\dots,x_{n-1},e_n)$ with $e_0 = e$ and $e_n = f$ such that for all $i \leq n$, $e_i \leq_\calH e$ and $x_i \leq x$. We show that if $\calP$ is a shortest path among those going through vertices smaller than $x$, then it is also decreasing. Assume toward a contradiction that $\calP$ is a non-decreasing such shortest path. Remember that by definition of paths, the edges $(e_i)$ are pairwise distinct. The same is true for the vertices $(x_i)$.  Moreover, observe that since $\calP$ is a shortest path, then it holds that:
\begin{equation}
 \label{quote:short}
  \forall i < n \forall j \notin \{i,i+1\}, x_i \notin e_j.
  \tag{$\star$}
\end{equation}
Indeed, if there exists $i$ and $j \notin \{i,i+1\}$ such that $x_i \in e_j$, $\calP$ could be shortened by going directly from $e_i$ to $e_j$ if $j > i+1$ or from $e_j$ to $e_{i+1}$ if $j < i$.

Let $i = \min \{j \mid x_{j+1} > x_j \text{ or } e_{j+1} >_\calH e_j \}$ be the first indices where $\calP$ does not respect the decreasing condition, which exists if $\calP$ is not decreasing by assumption.

First assume $i = 0$. By definition of $\calP$, $e_0 = e >_\calH e_1$. Thus it must be that $x_0 < x_1$. By definition, $x_0 \in e_0 \cap e_1$ and by Lemma~\ref{lem:ordervar}, $e_1 \cap \Vge{x_0} \subseteq e_0$. Since $x_1 > x_0$, $x_1 \in e_1 \cap \Vge{x_0}$, thus $x_1 \in e_0$ which contradicts~(\ref{quote:short}).

Now assume $i > 0$. First, assume that $e_{i+1} >_\calH e_i$. By definition of $\calP$, it holds that $x_i \in e_i \cap e_{i+1}$ and then by Lemma~\ref{lem:ordervar}, $e_i \cap \Vge{x_i} \subseteq e_{i+1}$. Now observe that by minimality of $i$, $x_{i-1} > x_i$. Since $x_{i-1} \in e_i$, $x_{i-1} \in e_i \cap \Vge{x_i} \subseteq e_{i+1}$, which contradicts~(\ref{quote:short}).

Otherwise, $e_i >_\calH e_{i+1}$ and then $x_{i+1} > x_i$. By Lemma~\ref{lem:ordervar} again, $e_{i+1} \cap \Vge{x_i} \subseteq e_i$. Since $x_{i+1} \in e_{i+1}$, this implies that $x_{i+1} \in e_{i+1} \cap \Vge{x_i} \subseteq e_i$, which contradicts~(\ref{quote:short}). It follows that such $i$ does not exist, that is, $\calP$ is decreasing.
\end{proof}

% We are now ready to prove Theorem~\ref{thm:hexvar}.
\begin{proof}[Proof (of Theorem~\ref{thm:hexvar}).]
We show by induction on $n$ that for any decreasing path $\calP = (e_0,x_0,\dots, e_n)$  from $e_0$ to $e_n$, we have $e_0 \supseteq e_n \cap \Vge{x_0}$. If $n = 0$, then $e_n = e_0$ and the inclusion is obvious. Now, let $\calP = (e_0,x_0,\dots,e_n,x_n,e_{n+1})$. By induction, $e_0 \supseteq e_n \cap \Vge{x_0}$ since $(e_0,x_0,\dots,e_n)$ is a decreasing path from $e_0$ to $e_n$. Now by Lemma~\ref{lem:ordervar}, since $x_n \in e_{n+1} \cap e_n$ and $e_{n+1}<_\calH e_n$, we have $e_{n+1} \cap \Vge{x_n} \subseteq e_n$. Since $x_0>x_n$, $e_{n+1} \cap \Vge{x_0} \subseteq e_{n+1} \cap \Vge{x_n} \subseteq e_n$. Thus $e_{n+1} \cap \Vge{x_0} \subseteq e_n \cap \Vge{x_0} \subseteq e_0$ which concludes the induction.

Now let $e \in \calH$, $x \in V(\calH)$ and $f \in \calH_e^x$. By Lemma~\ref{lem:pathdecrease}, there exists a decreasing path from $e$ to $f$ going through vertices smaller than $x$. From what precedes, $f \cap \Vge{x} \subseteq e$. Therefore $V(\calH_e^x) \cap \Vge{x} \subseteq e$.
\end{proof}

\subsection{Constructing the $\decDNNF$.}
\label{sec:exdecdnnf}

Given a $\CNF$-formula $F$ with hypergraph $\calH$, we can naturally define a family of subformulas $F_e^x$ from $\calH_e^x$ as the conjunction of clauses corresponding to the edges in $\calH_e^x$, that is $F_e^x = \{C \in F \mid \var(C) \in \calH_e^x\}$.  Theorem~\ref{thm:hexvar} implies in particular that $\var(F_e^x) \subseteq (e \cup \Vlt{x})$. Thus, if $\tau$ is an assignment of variables $(e \cap \Vgt{x})$, then $F_e^x[\tau]$ has all its variables in $\Vle{x}$. We will be particularly interested in such assignments: for a clause $C \in F$, denote by $\tau_C$ the only assignment of $\var(C)$ such that $\tau_C \not \models C$ and by $\tau_C^x := \tau_C |_{\Vgt{x}}$. We construct a $\decDNNF$ $D$ by dynamic programming such that for each clause $C$ with $\var(C) = e$ and variable $x \in V$, there exists a gate in $D$ computing $F_e^x[\tau_C^x]$, which is a formula with variables in $\Vle{x}$. Lemma~\ref{lem:fexplode} and Corollary~\ref{cor:fexplode} describe everything needed for the dynamic programming algorithm by expressing $F_e^x$ as a decomposable conjunction of precomputed values.

\begin{lemma}
\label{lem:fexplode}
  Let $x \in \var(F)$ such that $x \neq x_1$ and let $y \in \var(F)$ be the predecessor of $x$ for $<$. Let $e \in \calH(F)$ and $\tau: (e \cap V_{\geq x}) \rightarrow \{0,1\}$. Then either $F_e^x[\tau] \equiv 1$ or there exists $U \subseteq \calH_e^x$ and for all $g \in U$ a clause $C(g) \in F_e^x$ with $\var(C(g)) = g$ such that \[F_e^x[\tau] \equiv \bigwedge_{g \in U} F_{g}^y[\tau_{C(g)}^y].\]
Moreover, this conjunction is decomposable and $U$ can be found in polynomial time in $\size(F)$.
\end{lemma}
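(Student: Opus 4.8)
The idea is to peel off the top of the $\beta$-elimination order; set $\Vge{x}=\Vgt{y}$ and $\Vlt{x}=\Vle{y}$. I first record three facts. (i)~For $g\in\calH_e^x$ one has $\calH_g^y\subseteq\calH_e^x$: concatenate a walk witnessing $g\in\calH_e^x$ with one witnessing membership in $\calH_g^y$ (all vertices stay $<x$, all edges stay $\leq_\calH e$); hence $\calH_e^x=\bigcup_{g\in\calH_e^x}\calH_g^y$ since $g\in\calH_g^y$. (ii)~By Theorem~\ref{thm:hexvar}, $g'\in\calH_e^x$ implies $g'\cap\Vge{x}\subseteq e\cap\Vge{x}$, so $\tau$ fixes every variable of $g'$ in $\Vge{x}$; likewise $g'\in\calH_g^y$ implies $g'\cap\Vge{x}\subseteq g\cap\Vge{x}$. (iii)~Consequently, for $g\in\calH_e^x$ and any assignment $\sigma$ of $g\cap\Vge{x}$, all variables of $F_g^y[\sigma]$ lie in $V(\calH_g^y)\cap\Vlt{x}$.

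Call a clause $C$ of $F$ with $\var(C)=g'\in\calH_e^x$ \emph{live} if $C[\tau]\not\equiv 1$; by~(ii) this happens exactly when $\tau$ and $\tau_C$ agree on $g'\cap\Vge{x}$. Let $S$ be the set of variable-sets of live clauses. If $S=\emptyset$, every clause of $F_e^x$ is killed by $\tau$, so $F_e^x[\tau]\equiv 1$: the first case. Otherwise, applying Lemma~\ref{lem:hexinc} with both of its vertices equal to $y$, any two of the hypergraphs $\calH_{g'}^y$ ($g'\in S$) are either comparable for $\subseteq$ or share no vertex of $\Vle{y}$; let $U\subseteq S$ pick one representative $g$ for each $\subseteq$-maximal member of $\{\calH_{g'}^y:g'\in S\}$, and for $g\in U$ let $C(g)$ be a live clause with $\var(C(g))=g$.

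To prove $F_e^x[\tau]\equiv\bigwedge_{g\in U}F_g^y[\tau_{C(g)}^y]$, the key point is: if $g\in U$ and $g'\in\calH_g^y$, then $\tau$ and $\tau_{C(g)}^y$ agree on $g'\cap\Vge{x}$, because both arise by restricting to $g'\cap\Vge{x}$ the common map $\tau|_{g\cap\Vge{x}}=\tau_{C(g)}|_{g\cap\Vge{x}}$ (these are equal because $C(g)$ is live), using~(ii). Hence for any clause $C'$ with $\var(C')=g'\in\calH_g^y$, the restrictions $C'[\tau]$ and $C'[\tau_{C(g)}^y]$ substitute the same values into the same variables and so are equal. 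Now a live clause $C$ of $F_e^x$, say over $g'\in S$, has $\calH_{g'}^y$ contained in some maximal $\calH_g^y$ with $g\in U$, so $C[\tau]=C[\tau_{C(g)}^y]$ is a conjunct of the right-hand side (non-live clauses of $F_e^x$ contribute $1$); conversely any conjunct $C'[\tau_{C(g)}^y]$ of the right-hand side has $\var(C')\in\calH_g^y\subseteq\calH_e^x$ by~(i), so equals $C'[\tau]$, which is a conjunct of $F_e^x[\tau]$ (or is $\equiv 1$, hence implied).

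Finally, for distinct $g_1,g_2\in U$ the hypergraphs $\calH_{g_1}^y,\calH_{g_2}^y$ are $\subseteq$-incomparable maximal elements, hence by Lemma~\ref{lem:hexinc} share no vertex of $\Vle{y}$; by~(iii), $\var(F_{g_i}^y[\tau_{C(g_i)}^y])\subseteq V(\calH_{g_i}^y)\cap\Vlt{x}\subseteq V(\calH_{g_i}^y)\cap\Vle{y}$, so the supports are pairwise disjoint and the conjunction is decomposable; moreover all objects involved (the reachability sets $\calH_g^y$, the set $S$, its $\subseteq$-maximal members, representatives, witnesses) are computable in time polynomial in $\size(F)$. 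The main obstacle is reconciling the two substitutions $\tau$ and $\tau_{C(g)}^y$, which is precisely the displayed agreement step; this is why $C(g)$ must be chosen among the live clauses rather than arbitrarily, and it rests squarely on Theorem~\ref{thm:hexvar} through~(ii). The secondary subtlety --- disjointness of the pieces below $y$ --- is handled by the laminar structure of $\{\calH_g^y\}$ supplied by Lemma~\ref{lem:hexinc}.
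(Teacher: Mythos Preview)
Your proof is correct and follows essentially the same strategy as the paper: define the set of variable-sets of clauses not killed by $\tau$ (your $S$, the paper's $A$), take the $\leq_\calH$-maximal/$\subseteq$-maximal representatives to form $U$, use Theorem~\ref{thm:hexvar} to identify $\tau$ with $\tau_{C(g)}^y$ on the relevant variables, and invoke Lemma~\ref{lem:hexinc} for decomposability. Your phrasing of $U$ via $\subseteq$-maximal members of $\{\calH_{g'}^y:g'\in S\}$ and your clause-by-clause verification of the equivalence are slightly cleaner than the paper's (in particular you sidestep a potential ambiguity when two distinct $g,g'\in S$ give the same $\calH_g^y=\calH_{g'}^y$), but the mathematics is the same.
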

\begin{proof}
  Assume first that for all $C \in F_e^x$, $\tau \models C$. Thus $F_e^x[\tau] \equiv 1$ since every clause of $F_e^x$ is satisfied by $\tau$.

  Now assume that there exists $C \in F_e^x$ is such that $\tau \not \models C$. This means that $\tau \simeq \tau_C$.   We let $A = \{\var(C) \mid C \in F_e^x \text{ and } \tau \not \models C\} \neq \emptyset$ by assumption. Observe that \[F_e^x[\tau] \equiv \bigwedge_{\substack{C \in F_e^x \\ \var(C) \in A}} C[\tau] \] since for every $C \in F_e^x$, if $\var(C) \notin A$, $\tau \models C$ by construction of $A$.

Let $U = \{g \in A \mid \forall f \in A \setminus \{g\}, g \notin \calH_{f}^y \}$. For each $g \in U$, we choose an arbitrary clause $C(g)$ such that $\var(C(g)) = g$ and $\tau \not \models C(g)$. Such a clause exists since $U \subseteq A$. We claim that $U$ meets the conditions given in the statement of the lemma. 

We start by observing that $U$ can be computed in polynomial time in $\size(F)$. Indeed, computing $F_e^x$ for all $e,x$ can be done in polynomial time as it boils down to a computation of connected component in a hypergraph. Now to compute $A$, it is enough to test for every $C \in F_e^x$ that $\tau \not \models C$ which can be done in polynomial time in $\size(F)$. Finally, extracting $U$ from $A$ can also be done in polynomial time by testing for every $g \in A$ if $g$ respects the given condition: it is enough to test for every $f \in A \setminus \{g\}$ if $g \notin \calH_f^y$, which is possible since we can compute $\calH_f^y$ easily.

Now let $f \in A$. We show that there exists $g \in U$ such that $f \in \calH_g^y$. If $f \in U$, then we are done since $f \in \calH_f^y$.  Now assume that $f \notin U$. By definition of $U$, $B = \{g \in A \setminus \{f\} \mid f \in \calH_{g}^y \} \neq \emptyset$. We choose $g$ to be the maximum of $B$ for $\leq_\calH$. We claim that $g \in U$. Indeed, assume there exists $g' \in A$ such that $g \in \calH_{g'}^y$ and $g < g'$. By Lemma~\ref{lem:hexinc},  $\calH_g^y \subseteq \calH_{g'}^y$ and since $f \in \calH_g^y$, we also have $f \in \calH_{g'}^y$, that is, $g' \in B$. Yet, $g = \max(B)$ and $g \leq g'$, that is, $g = g'$. Thus $g \in U$. 

We thus have proved that for all $f \in A$, there exists $g \in U$ such that $f \in \calH_g^y$. Thus if $C$ is a clause of $F_e^x$, either $\var(C) \notin A$ and then $\tau \models C$ by definition of $A$, or $\var(C) \in A$, then there exists $g \in U$ such that $\var(C) \in \calH_g^y$, that is, $C \in F_g^y$. Now, if $C \in F_g^y$ for some $g \in U$, then $C \in F_e^x$ too since by Lemma~\ref{lem:hexinc}, $F_g^y \subseteq F_e^x$. Thus 
\[ F_e^x[\tau] \equiv \bigwedge_{g \in U} F_g^y[\tau]. \]

Let $g \in U$. We show that $\tau|_{\var(F_g^y)} = \tau_{C(g)}^y$. Observe that by Theorem~\ref{thm:hexvar}, $\var(F_g^y) \cap \Vge{x} = V(\calH_g^y) \cap \Vge{x} \subseteq g \cap \Vge{x}$. Since $\tau$ assigns variables from $e \cap \Vge{x}$: 
\[
\begin{aligned}
  \tau|_{\var(F_g^y)} & = \tau |_{\var(F_g^y) \cap \Vge{x} \cap e} \\
  & = \tau |_{g \cap \Vge{x} \cap e}
\end{aligned}
\]
 Moreover,  since $g \in \calH_e^x$, we have $g \cap \Vge{x} \subseteq e \cap \Vge{x}$ by Theorem~\ref{thm:hexvar} again. Thus $g \cap \Vge{x} \cap e = g \cap \Vge{x}$. In other words, $\tau|_{\var(F_g^y)}  = \tau |_{g \cap \Vge{x}}$.

Since $\tau$ assigns all variables of $e \cap \Vge{x}$ by assumption, $\tau |_{g \cap \Vge{x}}$ assigns all variables of $g \cap \Vge{x}$. Finally, since $\tau \not \models C(g)$ by construction of $C(g)$, we have $\tau \simeq \tau_{C(g)}^y$. Since by definition $\var(C(g)) = g$, it follows that $\tau|_{\var(F_g^y)} = \tau_{C(g)}^y$. So far, we have proven that
\[ F_e^x[\tau] \equiv \bigwedge_{g \in U} F_g^y[\tau_{C(g)}^y]. \]

It remains to show that this conjunction is decomposable, that is, for all $g_1, g_2 \in U$, 
$\var(F_{g_1}^y[\tau_{C(g_1)}^y]) \cap \var(F_{g_2}^y[\tau_{C(g_2)}^y]) = \emptyset$. Let $g_1,g_2 \in U$ with $g_1 <_\calH g_2$ and assume there exists $z \in \var(F_{g_1}^y[\tau_{C(g_1)}^y]) \cap \var(F_{g_2}^y[\tau_{C(g_2)}^y])$, that is, $z \in \var(F_{g_1}^y) \cap \var(F_{g_2}^y) \cap \Vle{y}$. From what precedes, $\tau$ assigns every variable of $F_{g_1}^y$ greater than $x$. By Lemma~\ref{lem:hexinc}, we have $F_{g_1}^y \subseteq F_{g_2}^y$, which contradicts the fact that $g_1 \in U$.
\end{proof}

\begin{corollary}
  \label{cor:fexplode} Let $x \in \var(F)$ such that $x \neq x_1$ and let $y \in \var(F)$ be the predecessor of $x$ for $<$.  For every $C \in \calH(F)$, there exist $U_0,U_1 \subseteq \calH_{\var(C)}^x$ and for all $g \in U_0 \cup U_1$ a clause $C(g) \in F_{\var(C)}^x$ with $\var(C(g)) = g$ such that \[F_{\var(C)}^x[\tau_C^x] \equiv (x \land \bigwedge_{g \in U_1} F_{g}^y[\tau_{C(g)}^y]) \lor (\neg x \land \bigwedge_{g \in U_0} F_{g}^y[\tau_{C(g)}^y]).\]
Moreover, all conjunctions are decomposable and $U_0,U_1$ can be found in polynomial time in $\size(F)$.
\end{corollary}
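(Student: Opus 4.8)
The plan is to reduce Corollary~\ref{cor:fexplode} to Lemma~\ref{lem:fexplode} via a Shannon expansion on $x$. Writing $e := \var(C)$, recall from the discussion preceding the corollary that Theorem~\ref{thm:hexvar} gives $\var(F_e^x) \subseteq e \cup \Vlt{x}$, so $F_e^x[\tau_C^x]$ has all its variables in $\Vle{x}$, and that $x \in \var(F_e^x)$ can only happen when $x \in e$ (since $x \in V(\calH_e^x) \cap \Vge{x} \subseteq e$). The idea is: in each branch $x \mapsto 0$ and $x \mapsto 1$, the restriction of $F_e^x$ is controlled by an assignment that covers \emph{all} of $e \cap \Vge{x}$, so Lemma~\ref{lem:fexplode} applies unchanged, and gluing the two resulting conjunctions through a decision gate on $x$ produces the claimed expression.

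First I would dispatch the case where $x$ does not occur in $F_e^x$ (in particular whenever $x \notin e$). Then $\tau_C^x$ already assigns every variable of $e \cap \Vge{x} = e \cap \Vgt{x}$, so Lemma~\ref{lem:fexplode} yields either $F_e^x[\tau_C^x] \equiv 1$ --- take $U_0 = U_1 = \emptyset$ --- or a set $U \subseteq \calH_e^x$ together with clauses such that $F_e^x[\tau_C^x] \equiv \bigwedge_{g \in U} F_g^y[\tau_{C(g)}^y]$. Putting $U_0 = U_1 = U$, the two sides of the decision gate carry identical conjunctions, so $(x \wedge \bigwedge_{g \in U_1} F_g^y[\tau_{C(g)}^y]) \vee (\neg x \wedge \bigwedge_{g \in U_0} F_g^y[\tau_{C(g)}^y])$ is equivalent to that single conjunction, hence to $F_e^x[\tau_C^x]$; since $x$ does not occur in any $F_g^y[\tau_{C(g)}^y]$ (those formulas being subsets of $F_e^x$), the two new conjunctions are decomposable.

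For the remaining case $x \in e$ with $x$ occurring in $F_e^x$, I would use the Shannon expansion
\[
F_e^x[\tau_C^x] \ \equiv\ \bigl(x \wedge F_e^x[\tau_C^x][x \mapsto 1]\bigr) \vee \bigl(\neg x \wedge F_e^x[\tau_C^x][x \mapsto 0]\bigr),
\]
valid because $x$ no longer occurs once it is substituted. For $b \in \{0,1\}$ the assignment $\sigma_b := \tau_C^x \cup \{x \mapsto b\}$ has domain exactly $e \cap \Vge{x}$ (here $x \in e$), so Lemma~\ref{lem:fexplode} applied to $F_e^x[\sigma_b]$ provides $U_b \subseteq \calH_e^x$ and clauses with $F_e^x[\sigma_b] \equiv \bigwedge_{g \in U_b} F_g^y[\tau_{C(g)}^y]$ (with $U_b = \emptyset$ when $F_e^x[\sigma_b] \equiv 1$); since $F_e^x[\tau_C^x][x \mapsto b] = F_e^x[\sigma_b]$, substituting these into the expansion gives the equivalence claimed in the corollary. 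Decomposability of each $\bigwedge_{g \in U_b} F_g^y[\tau_{C(g)}^y]$ is part of Lemma~\ref{lem:fexplode}, and the literal $x$ (resp. $\neg x$) is variable-disjoint from every $F_g^y[\tau_{C(g)}^y]$: Theorem~\ref{thm:hexvar} gives $\var(F_g^y) \subseteq g \cup \Vlt{y}$, and $\tau_{C(g)}^y$ assigns away the portion of $g$ lying in $\Vge{x} = \Vgt{y}$, so $\var(F_g^y[\tau_{C(g)}^y]) \subseteq \Vlt{x}$. The polynomial-time claim is immediate, since computing $\calH_e^x$, forming the restrictions, and the two calls to Lemma~\ref{lem:fexplode} are each polynomial in $\size(F)$.

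The single point requiring care is matching the statement's one choice function $g \mapsto C(g)$ against the two, a priori distinct, clause functions handed back by the two applications of Lemma~\ref{lem:fexplode}. If $g \in U_0 \cap U_1$ and $x \notin g$, then $\sigma_0$ and $\sigma_1$ agree on $g$, hence on $g \cap \Vgt{y}$; a clause witnessing $g \in U_0$ then also witnesses $g \in U_1$ and induces the very same $\tau_{C(g)}^y$, so a common $C(g)$ works. If instead $g \in U_0 \cap U_1$ and $x \in g$, the clause witnessing $g \in U_b$ is forced to contain the $x$-literal that $\sigma_b$ falsifies, so the two branches genuinely employ clauses that disagree on $x$; accordingly $C$ must be read as a function on the disjoint union $U_0 \uplus U_1$ (equivalently, one records two gates for such a $g$, one per branch), which is exactly what the dynamic programming of this section consumes. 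I expect this indexing to be the only subtlety; everything else is routine verification.
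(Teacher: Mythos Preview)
Your proof is correct and takes the same route as the paper: a Shannon expansion on $x$ followed by two applications of Lemma~\ref{lem:fexplode}, one per branch. You are in fact more thorough---handling the case $x \notin e$ that the paper tacitly omits (the corollary is only ever invoked when $x \in e$), and flagging the clause-map ambiguity on $U_0 \cap U_1$, a minor notational imprecision in the statement that the paper's three-line proof does not address.
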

\begin{proof}
  Let $\tau_1 = \tau_C^x \cup \{x \mapsto 1 \}$ and $\tau_0 = \tau_C^x \cup \{x \mapsto 0 \}$. We observe that
\[ F_{\var(C)}^x[\tau_C^x] = (x \land F_{\var(C)}^x[\tau_1]) \lor (\neg x \land F_{\var(C)}^x[\tau_0]).\]
Clearly, $x \notin \var(F_{\var(C)}^x[\tau_1])$ and $x \notin \var(F_{\var(C)}^x[\tau_0])$, thus, both conjunctions are decomposable. Now, applying Lemma~\ref{lem:fexplode} on $F_{\var(C)}^x[\tau_0]$ and on $F_{\var(C)}^x[\tau_1]$ yields the desired decomposition.
\end{proof}

\begin{theorem}
  \label{thm:compileexists} Let $F$ be a $\beta$-acyclic $\CNF$-formula. One can construct in polynomial time in $\size(F)$ a $\decDNNF$ $D$ of size $O(\size(F))$ and fanin at most $|\calH|$ computing $F$.
\end{theorem}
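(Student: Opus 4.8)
The plan is to build the circuit bottom-up along the $\beta$-elimination order $x_1 < \dots < x_n$, maintaining the invariant that after stage $i$ there is, for every clause $C$ and every variable $x \in \var(C)$ with $x \leq x_i$, a gate $G(C,x)$ computing $F_{\var(C)}^x[\tau_C^x]$; by Theorem~\ref{thm:hexvar} this function uses only variables in $\Vle{x}$. At stage $1$ (so $x_1 \in \var(C)$) the function $F_{\var(C)}^{x_1}[\tau_C^{x_1}]$ has all its variables in $\Vle{x_1} = \{x_1\}$, hence equals one of $0,1,x_1,\neg x_1$ and is realised by a constant or a literal input. At stage $i>1$, for each $C$ with $x_i \in \var(C)$ I apply Corollary~\ref{cor:fexplode} with $x = x_i$ and realise $F_{\var(C)}^{x_i}[\tau_C^{x_i}]$ as a decision gate on $x_i$ whose two children are the decomposable conjunctions $\bigwedge_{g \in U_1} F_g^{y}[\tau_{C(g)}^{y}]$ and $\bigwedge_{g \in U_0} F_g^{y}[\tau_{C(g)}^{y}]$ ($y$ the predecessor of $x_i$), each conjunction being a fresh $\wedge$-gate whose inputs are gates for the $F_g^y[\tau_{C(g)}^y]$. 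That $G(C,x)$ then computes $F_{\var(C)}^x[\tau_C^x]$ follows by induction from Corollary~\ref{cor:fexplode}.

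For the recursion to stay inside the invariant I must feed this construction gates indexed by pairs $(C',x')$ with $x' \in \var(C')$, whereas Corollary~\ref{cor:fexplode} hands me $F_g^y[\tau_{C(g)}^y]$ with $y$ the predecessor of $x_i$, a variable that need not lie in $g = \var(C(g))$. I would bridge this with a reindexing lemma: \emph{if $g \in \calH$ and $u < v$ are variables with $g \cap \Vgt{u} \cap \Vle{v} = \emptyset$, then $\calH_g^u = \calH_g^v$, hence $F_g^u = F_g^v$, and $\tau_{C'}^u = \tau_{C'}^v$ for every clause $C'$ with $\var(C') = g$.} The inclusion $\calH_g^u \subseteq \calH_g^v$ comes from Lemma~\ref{lem:hexinc}; for the converse, given $f \in \calH_g^v$, Lemma~\ref{lem:pathdecrease} yields a decreasing path from $g$ to $f$ through vertices $\leq v$, whose first vertex lies in $g$ (it is shared with the next edge) and in $\Vle{v}$, hence in $\Vle{u}$, so by the decreasing condition all its vertices are $\leq u$ and $f \in \calH_g^u$; the $\tau$-equality is immediate since the two restrictions differ only on variables of $g$ in $\Vgt{u}\cap\Vle{v}$. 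Applying this with $v = y$ and $u = y_g := \max(\var(C(g)) \cap \Vlt{x_i})$ — which is well defined, since $g \in \calH_{\var(C)}^{x_i}$ together with $x_i \in \var(C)$ forces $g$ to contain a vertex $< x_i$ — replaces $F_g^y[\tau_{C(g)}^y]$ by $F_g^{y_g}[\tau_{C(g)}^{y_g}] = G(C(g),y_g)$, with $y_g \in \var(C(g))$ and $y_g < x_i$. So every wire points to an earlier gate of the invariant, and the recursion is well founded (the variable strictly decreases, bottoming out at $x_1$ or at a subfunction that Lemma~\ref{lem:fexplode} certifies to be constant).

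For the output I would take, over the connected components $\mathcal{K}$ of $\calH$, the conjunction of the gates $G(C_\mathcal{K}, x_\mathcal{K})$, where $e_\mathcal{K}$ is the $<_\calH$-maximum edge of $\mathcal{K}$, $x_\mathcal{K}$ its $<$-maximum vertex, and $C_\mathcal{K}$ any clause with $\var(C_\mathcal{K}) = e_\mathcal{K}$. One checks $\calH_{e_\mathcal{K}}^{x_\mathcal{K}} = \mathcal{K}$ (a walk inside $\mathcal{K}$ uses only edges $\leq_\calH e_\mathcal{K}$ and vertices $\leq x_\mathcal{K}$, and nothing outside $\mathcal{K}$ is reachable) and that $\tau_{C_\mathcal{K}}^{x_\mathcal{K}}$ is the empty assignment (all variables of $C_\mathcal{K}$ are $\leq x_\mathcal{K}$), so $G(C_\mathcal{K},x_\mathcal{K})$ computes $\bigwedge\{C \in F : \var(C) \in \mathcal{K}\}$; since the components partition the clauses and have disjoint variable sets, this output $\wedge$-gate is decomposable and computes $F$. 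Moreover $x_\mathcal{K} \in e_\mathcal{K}$, so the referenced gate is indeed one of the invariant.

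Finally the bookkeeping. Every $\wedge$-gate is decomposable (the inner conjunctions by Corollary~\ref{cor:fexplode} combined with the reindexing lemma; each gate $x_i \wedge v$ because $\var(v) \subseteq \Vlt{x_i}$; the output gate because components are vertex-disjoint) and every $\vee$-gate is a decision gate, hence deterministic, so the circuit is a $\decDNNF$. There is one $O(1)$-size cluster of gates per pair $(C,x)$ with $x \in \var(C)$, with $\wedge$-fanin at most $|U_0|,|U_1| \leq |\calH_{\var(C)}^x| \leq |\calH|$, plus $2n$ reusable input literals and one output gate of fanin at most the number of components $\leq |\calH|$; hence $\size(D) = O(\sum_{C} |\var(C)|) = O(\size(F))$ and all fanins are $\leq |\calH|$. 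All the objects involved (the $\calH_e^x$, the sets $U_0,U_1$ and clauses $C(g)$ of Corollary~\ref{cor:fexplode}, the indices $y_g$) are computable in time polynomial in $\size(F)$, so the construction is polynomial. The only genuinely delicate point beyond routine verification is the reindexing lemma of the second paragraph, which is exactly what keeps the size linear rather than quadratic in $\size(F)$.
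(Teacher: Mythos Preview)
Your approach is essentially the paper's: build the circuit bottom-up along the $\beta$-elimination order, invoking Corollary~\ref{cor:fexplode} at each stage. The only organisational difference is that the paper keeps a gate for \emph{every} pair $(C,x_j)$ with $j\le i$ and notes that when $x_{i+1}\notin\var(C)$ one has $\calH_{\var(C)}^{x_{i+1}}=\calH_{\var(C)}^{x_i}$ so no new gate is needed; you instead restrict the invariant to pairs with $x\in\var(C)$ and compensate with your reindexing lemma, which is exactly that same observation repackaged. Both routes yield the same $O(\size(F))$ bound.

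There is, however, a small but genuine gap in your reindexing step. You assert that $y_g=\max(\var(C(g))\cap\Vlt{x_i})$ is always well defined because ``$g\in\calH_{\var(C)}^{x_i}$ together with $x_i\in\var(C)$ forces $g$ to contain a vertex $<x_i$''. This is false: take $\calH=\{\{x_1,x_2\},\{x_2\}\}$ with the order $x_1<x_2$, and $F=\{(x_1\lor x_2),(x_2)\}$. At stage $i=2$ with $C=(x_1\lor x_2)$ and $\tau_0=\{x_2\mapsto 0\}$, Lemma~\ref{lem:fexplode} gives $U_0=\{\{x_1,x_2\},\{x_2\}\}$, and $g=\{x_2\}$ has $g\cap\Vlt{x_2}=\emptyset$. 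Membership in $\calH_e^{x_i}$ only says there is a walk to $e$ through vertices $\le x_i$; it does not force $g$ itself to dip below $x_i$. The fix is easy: whenever $g\cap\Vlt{x_i}=\emptyset$, Theorem~\ref{thm:hexvar} gives $g\subseteq e\cap\Vge{x_i}$, so $\tau_{C(g)}^y$ assigns all of $g$ and falsifies $C(g)$, whence $F_g^y[\tau_{C(g)}^y]\equiv 0$; wire that conjunct to a constant-$0$ input instead. The paper's broader invariant absorbs this case automatically, since its base stage already installs a $0$-gate for clauses $C$ with $x_1\notin\var(C)$.
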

\begin{proof}
  Let $\calH$ be the hypergraph of $F$ and $<$ a $\beta$-elimination order. Let $\var(F) = \{x_1, \dots, x_n\}$ where $x_i < x_j$ if and only if $i<j$. We construct by induction on $i$ a $\decDNNF$ $D_i$ of fanin $|\calH|$ at most such that for each $e \in \calH$, $C \in F$ such that $\var(C) = e$ and $j \leq i$, there exists a gate in $D_i$ computing $F_e^{x_j}[\tau_C^{x_j}]$ and $|D_i| \leq 7 \cdot (\sum_{j=1}^i c(x_j))$ where $c(x_j)$ is the number of clauses in $F$ holding $x_j$.

We start by explaining how $D_1$ is constructed. Let $e \in \calH$. If $x_1 \notin e$, then $F_e^{x_1}$ contains only the clauses $C$ such that $e = \var(C)$. For such a $C$, $\tau_C^{x_1} = \tau_C$, thus $F_e^{x_1}[\tau_C] = 0$. Now, if $x_1 \in e$, $F_e^{x_1}$ contains only clauses $D$ such that $x_1 \in \var(D) \subseteq e$ since $x_1$ is the first element of the elimination order. Let $C$ be a clause such that $\var(C) = e$. For every $D \in F_e^{x_1}$, $\var(D) \subseteq \var(C)$, thus $F_e^{x_1}[\tau_C^{x_1}]$ has only one variable: $x_1$. Thus $F_e^{x_1}[\tau_C^{x_1}]$ is equivalent to either $x_1$, $\neg x_1$ or $0$. We thus define $D_1$ to be the $\decDNNF$ with at most three gates $x_1, \neg x_1$ and $0$, which are input gates. We have $|D_1| \leq 3 \leq 7 \cdot c(x_1)$.

Now lets assume $D_i$ is constructed. To ease notations, let $x = x_{i+1}$. Let $e \in \calH$ and $C$ be a clause such that $\var(C) = e$. We want to add a gate in $D_i$ that will compute $F_e^x[\tau_C^x]$. If $x \notin e$, then $\calH_e^x = \calH_e^{x_i}$ since by Theorem~\ref{thm:hexvar}, $\var(\calH_e^{x_i}) \subseteq (e \cup \Vlt{x_i})$. Thus $F_e^x = F_e^{x_i}$ and $\tau_C^x = \tau_C^{x_i}$. Therefore, there is already a gate computing $F_e^x[\tau_C^x]$ in $D_i$.

Assume now that $x \in e$. By Corollary~\ref{cor:fexplode}, we can compute $F_{\var(C)}^x[\tau_C^x]$ for every $C$ with $\var(C) = e$ by adding at most one decision-gate and a fanin $|\calH|$ decomposable and-gate to $D_i$ since for all values appearing in the statement of Corollary~\ref{cor:fexplode} there exists a gate in $D_i$ computing it. Moreover such gate can be found in polynomial time. That is, we add to $D_i$ at most $7$ gates to compute $F_{\var(C)}^x[\tau_C^x]$. We have to do this for each $C \in F$ such that $x \in \var(C)$. We thus add at most $7c(x)$ gates in $D_i$. Thus $|D_{i+1}| \leq 7 \cdot \sum_{j\leq i+1} c(x_j)$.

To conclude, assume that $\calH$ is connected and let $e = \max(\calH)$. We have $\calH_e^{x_n} = \calH$ since there is a path from $e$ to every other edge in $\calH$. Thus $F_e^{x_n} = F$. Let $C$ be a clause with $\var(C) = e$. The assignment $\tau_C^{x_n}$ is empty, thus $F_e^{x_n}[\tau_C^{x_n}] \equiv F$. Hence, there is a gate in $D_n$ that computes $F$ and $D_n$ is of size at most $7 \cdot \size(F)$ and fanin $|\calH|$ at most. Each step can be done in polynomial time in $\size(F)$.

If $\calH$ is not connected, then each connected component of $\calH$ is $\beta$-acyclic, thus we can compile them independently and take the decomposable conjunction of these $\decDNNF$.
\end{proof}

We conclude this section by giving insights on the significance of Theorem~\ref{thm:compileexists} from a practical point of view. Most practical tools for $\sSAT$ are based on an algorithm called exhaustive DPLL with caching~\cite{HuangD05,sang04,thurley06,BacchusDP03} which works as follows: given $F$, the algorithm starts by trying to write $F$ as $F_1 \land F_2$ with $F_1$ and $F_2$ having no common variables. If it succeeds, it computes recursively $\#F_1$, $\#F_2$ and returns $\#F_1 \cdot \#F_2$. Otherwise, it chooses a variable $x$ and returns $\#F[x \mapsto 0]+\#F[x \mapsto 1]$. In addition, these tools use caching techniques to avoid redoing the same computation twice. It was observed in~\cite{HuangD05} that the trace of such algorithms is exactly a $\decDNNF$. It is not hard to see that the construction given in Theorem~\ref{thm:compileexists} is the trace of a run of an exhaustive DPLL algorithm where the variables are chosen in a reverse $\beta$-elimination order. This shows that if the right elimination order of the variables is chosen (and this order can be computed greedily in polynomial time), then practical tools for solving $\sSAT$ can in theory solve $\beta$-acyclic formulas in polynomial time.
 % and that each $\decDNNF$ can be seen as the trace of this algorithm on a well-chosen order of variables with full caching. Theorem~\ref{thm:compileexists} suggests then that practical tools based on exhaustive DPLL with caching should perform well on $\beta$-acyclic formulas. It can be verified that with such tools, variables should be chosen in a reverse $\beta$-elimination order as the trace of DPLL constructs the $\decDNNF$ from the output to the inputs.

\section{Deviation from the technique based on branch decompositions}
\label{sec:deviation}

In this section, we finally prove that standard techniques based on branch decompositions fail on $\beta$-acyclic formulas. Recall that we have defined in Section~\ref{sec:sbd} the standard technique to be the implicit construction of a polynomial size structured $\DNNF$ equivalent to the input formula. We formally prove the following:

\begin{theorem}
  \label{thm:betalb} There exists an infinite family $\calF$ of $\beta$-acyclic $\CNF$-formulas such that for every $F \in \calF$ having $n$ variables, there is no structured $\DNNF$ of size less than $2^{\Omega(\sqrt{n})}$ computing $F$.
\end{theorem}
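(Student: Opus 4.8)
The plan is to prove the lower bound through communication complexity, which is the standard route for size lower bounds on structured $\DNNF$. The first step is to recall the ``rectangle cover'' behaviour of structured $\DNNF$: if $D$ respects a vtree $T$ and $t$ is a node of $T$, writing $B$ for the set of variables labelling the leaves below $t$ and $A$ for the remaining variables, one can read off from $D$ a cover of $\sat(F)$ by at most $\size(D)$ combinatorial rectangles over the partition $(A,B)$, i.e.\ sets of the form $\{\sigma\cup\tau\mid\sigma\in L,\ \tau\in R\}$ with $L\subseteq\{0,1\}^A$ and $R\subseteq\{0,1\}^B$ (each gate $v$ whose variable set is contained in $B$ but whose parent's is not contributes one rectangle; a standard polynomial-size normalisation of $D$ makes this precise). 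Since every binary tree with $n$ leaves has a node carrying between $n/3$ and $2n/3$ of them, a structured $\DNNF$ for $F$ always yields such a cover for \emph{some} balanced partition of $\var(F)$. It therefore suffices to build an infinite family of $\beta$-acyclic $\CNF$-formulas $F$, with $n=|\var(F)|$, such that for \emph{every} partition $A\uplus B=\var(F)$ with $|A|,|B|\geq n/3$, every rectangle cover of $\sat(F)$ over $(A,B)$ has at least $2^{\Omega(\sqrt n)}$ members; I would obtain this via fooling sets, a set $S\subseteq\sat(F)$ such that for all distinct $\sigma,\tau\in S$ at least one of $\sigma|_A\cup\tau|_B$ and $\tau|_A\cup\sigma|_B$ lies outside $\sat(F)$, which forces any cover to have at least $|S|$ rectangles.

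For the family, the tension is that $\beta$-acyclicity is a strong ``thinness'' condition — $\beta$-acyclic hypergraphs are exactly those obtained by repeatedly deleting nest points — whereas we need robust connectivity: every balanced partition of the variables must sever a large and well-structured substructure. This excludes the usual expander-based hard instances, but it is precisely the situation addressed by Theorem~\ref{thm:betavsmm}, which supplies $\beta$-acyclic hypergraphs $\calH$ whose incidence graph is, along every branch decomposition, forced to carry a node with a large induced matching. I would take the family $\calH_m$ underlying Theorem~\ref{thm:betavsmm}, and let $F_m$ be a $\CNF$ with one clause $C_e$ per hyperedge $e\in\calH_m$ on variable set $e$ — most cleanly the monotone clause $C_e=\bigvee_{v\in e}v$, so that $\sat(F_m)$ is the set of transversals of $\calH_m$ and $\calH(F_m)=\calH_m$ is $\beta$-acyclic, although a parity-style gadget $x_e\leftrightarrow\bigoplus_{v\in e}v$ would serve as well. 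Adding a fresh private variable to each clause (private variables are nest points, so $\beta$-acyclicity is preserved) and thereby inflating the variable count to $n=\Theta(m^2)$ is what turns the linear-in-$m$ lower bound into the stated $2^{\Omega(\sqrt n)}$.

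Given a balanced partition $(A,B)$ of $\var(F_m)$, I would use the robust connectivity of $\calH_m$ from Theorem~\ref{thm:betavsmm}, re-read for variable-only partitions, to extract $k=\Omega(m)=\Omega(\sqrt n)$ hyperedges $e_1,\dots,e_k$ forming an induced matching with private witnesses and each \emph{cut} by $(A,B)$, i.e.\ $e_i\cap A\neq\emptyset\neq e_i\cap B$. Fixing a transversal $\sigma_0$ of $\calH_m$ and, for each $I\subseteq[k]$, re-routing the hit of $e_i$ entirely inside $e_i\cap A$ when $i\in I$ and entirely inside $e_i\cap B$ otherwise — the induced-matching structure guarantees these local edits keep all other clauses satisfied — yields models $\sigma_I$ of $F_m$. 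If $I\neq J$, say $i\in I\setminus J$, then in the crossed assignment $\sigma_J|_A\cup\sigma_I|_B$ every variable of $e_i$ is set to $0$, so $C_{e_i}$ is falsified; hence $\{\sigma_I\}_{I\subseteq[k]}$ is a fooling set for $(A,B)$ of size $2^{\Omega(\sqrt n)}$, and $\size(D)\geq 2^{\Omega(\sqrt n)}$ follows.

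The main obstacle — absorbing essentially all of the work — is the construction and analysis of the family: reconciling $\beta$-acyclicity with the demand that \emph{every} balanced partition of the \emph{variables} cuts many edges forming an induced matching, and in particular bridging the gap between the $\MIMw$ statement of Theorem~\ref{thm:betavsmm}, which is about branch decompositions of the incidence graph (hence partitions of variables \emph{and} clauses), and the variable-only partitions that a vtree produces. Carrying out the independence bookkeeping so that each local re-routing of a transversal leaves the rest of $F_m$ satisfied, and pinning down the scaling that gives exactly $2^{\Omega(\sqrt n)}$, are the remaining hurdles; everything else is the by-now-routine communication-complexity machinery.
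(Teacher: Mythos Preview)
Your overall strategy---rectangle covers read off the vtree, the $\beta$-acyclic family of Theorem~\ref{thm:betavsmm}, a fresh private variable per clause---matches the paper's, but you have correctly located the crux and then left it unresolved. The obstacle you flag as ``the main obstacle'' (passing from $\MIMw$ of the \emph{incidence graph} to the variable-only partitions a vtree produces) is precisely what the paper's argument dissolves, and it does so using an ingredient you already have in hand but do not exploit for this purpose.

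The point is that once one adds a fresh variable $c_K$ per clause $K$ to form $\hat F$, the set $\var(\hat F)$ is in bijection with the vertex set of $\Ginc(F)$. Hence \emph{any} vtree $T$ on $\var(\hat F)$ is literally a branch decomposition of $\Ginc(F)$, and the definition of $\MIMw$ applies to $T$ as is: there exists \emph{some} node $t$ of $T$ (not a balanced one fixed in advance) at which $G[V_t,V\setminus V_t]$ carries an induced matching of size $\geq k$. There is no need to ``re-read Theorem~\ref{thm:betavsmm} for variable-only partitions,'' nor to show that every balanced cut severs many edges; one simply uses the node $t$ that the $\MIMw$ bound hands over. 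Your detour through the $1/3$--$2/3$ node is both unnecessary and the source of the gap, since $\MIMw$ guarantees nothing about any prescribed cut.

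From that node the paper also takes a cleaner route than your direct fooling-set construction. Each matching edge pairs an original variable $x_e$ with a clause-vertex $c_e$ on opposite sides of the cut; taking the larger of the two orientations yields $\geq k/2$ such pairs with $x_{e'}\notin c_e$ for $e\neq e'$. Setting every clause-variable outside the matching to $1$ and every original variable outside it to $0$ (this is where monotonicity is used) collapses $\hat F[\tau]$ to $\bigwedge_{e}(x_e\lor c_e)$, a perfect-matching formula across the cut, whose $(V_t,V\setminus V_t)$-rectangle covering number is $2^{|N|}$ by the standard argument. This replaces your transversal re-routing and the attendant independence bookkeeping.

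In short, the private clause-variables are not merely a scaling device to obtain $n=O(m^2)$; they are exactly what turns the vtree into a branch decomposition of the incidence graph, after which Theorem~\ref{thm:betavsmm} applies verbatim and your ``main obstacle'' evaporates.
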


% \subsection{Proving the lower bound.}
We use techniques based on communication complexity tools developed in~\cite{BovaCMS16} to prove lower bounds on the size of structured $\DNNF$.

\begin{definition}
  Let $r$ be a boolean function on variables $X$ and let $(Y,Z)$ be a partition of $X$. The function $r$ is a $(Y,Z)$-rectangle if and only if for every $\tau,\tau' \in \{0,1\}^X$ such that $\tau \models r$ and $\tau' \models r$, we have $(\tau|_Y \cup \tau'|_Z) \models r$. A $(Y,Z)$-rectangle cover of a boolean function $f$ is a set $R=\{r_1,\dots,r_q\}$ of $(Y,Z)$-rectangles such that $\sat(f) = \bigcup_{i=1}^q \sat(r_i)$.  
\end{definition}

\begin{theorem}[\cite{BovaCMS16},\cite{DarwicheP10}]
  \label{thm:rectlb} Let $D$ be a $\DNNF$ on variables $X$ respecting the vtree $T$. For every vertex $t$ of $T$, there exists a $(X_t,X \setminus X_t)$-rectangle cover of $D$ of size at most $|D|$, where $X_t = \var(T_t)$.
\end{theorem}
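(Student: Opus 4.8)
The plan is to reprove this via the standard \emph{proof tree} (certificate) decomposition of a $\DNNF$, combined with a cut-and-paste argument that exploits structuredness. First I would record that a $(Y,Z)$-rectangle is exactly a boolean function whose satisfying set is a combinatorial product $A \times B$ with $A \subseteq \{0,1\}^Y$ and $B \subseteq \{0,1\}^Z$, which is an immediate reformulation of the closure condition. Then I would set up certificates: a certificate $\pi$ of $D$ is a subtree of the unfolding of $D$ that contains the output, selects exactly one input at each $\vee$-gate, all inputs at each $\wedge$-gate, and bottoms out at literals. Decomposability forces any two distinct leaves of $\pi$ to diverge at a common $\wedge$-ancestor whose two sides carry disjoint variable sets; hence \emph{each variable is assigned at most once} in $\pi$, so $\pi$ defines a consistent partial assignment $a_\pi$, and $\sat(D) = \bigcup_\pi \{\tau : \tau \text{ extends } a_\pi\}$. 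Recall also that here every $\wedge$-gate respects some vertex of $T$, hence is binary.

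Next I would fix the vtree vertex $t$, write $Y = X_t$ and $Z = X \setminus X_t$, and classify each gate $v$ as a $Y$-gate, a $Z$-gate, or a \emph{straddling} gate according to whether $\var(D_v)$ meets $Y$ only, $Z$ only, or both. The key structural lemma, which uses that $D$ respects $T$, is: every straddling $\wedge$-gate respects a \emph{strict} ancestor $s$ of $t$ (it cannot respect $t$ or a node inside $T_t$, which would make it a $Y$-gate, nor a node disjoint from $T_t$, which would make it a $Z$-gate); consequently, writing $s'$ for the child of $s$ whose subtree contains $t$, one child has variables in $X_{s'} \supseteq Y$ while the other child is a $Z$-gate. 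From this I would deduce the shape of a certificate: within $\pi$ the straddling gates form a single root-to-$w$ chain (at a straddling $\vee$-gate the chosen child must straddle, and at a straddling $\wedge$-gate the away-from-$t$ child is a $Z$-gate while only the toward-$t$ child can straddle), and this chain terminates at an $\wedge$-gate $w$ with children $c_1,c_2$ such that the $\pi$-subtree below $c_1$ is all-$Y$ and $c_2$ is a $Z$-gate. All side-branches hanging off the chain are $Z$-gates, so \emph{every $Y$-variable assigned by $\pi$ lies strictly below $c_1$}, whereas $a_\pi|_Z$ is produced entirely by the chain, its side-branches and the subtree below $c_2$.

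The covering family would then be indexed by the $\wedge$-gates: for each $\wedge$-gate $w$ set $r_w = \{\tau : \text{some certificate } \pi \text{ with bottom straddling gate } w \text{ has } \tau \text{ extending } a_\pi\}$. Each $r_w \subseteq \sat(D)$ by soundness of certificates, and every straddling certificate contributes to exactly one $r_w$, so these sets, together with at most two cylinder rectangles collecting the purely one-sided certificates, cover $\sat(D)$; since there is one $r_w$ per $\wedge$-gate, the cover has size at most $|D|$. To see that $r_w$ is genuinely a $(Y,Z)$-rectangle I would run the cut-and-paste argument: given $\tau,\tau' \in r_w$ witnessed by certificates $\pi,\pi'$, form $\pi''$ by keeping $\pi'$'s chain, side-branches and $c_2$-subtree (all on $Z$) but grafting $\pi$'s all-$Y$ subtree below $c_1$ in place of $\pi'$'s. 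Because the $Y$-part lives only below $c_1$ and the $Z$-part lives only outside it, $\pi''$ is a consistent certificate of $D$ whose bottom straddling gate is still $w$, with $a_{\pi''}|_Y = a_\pi|_Y$ and $a_{\pi''}|_Z = a_{\pi'}|_Z$; hence $\tau|_Y \cup \tau'|_Z$ extends $a_{\pi''}$, satisfies $D$, and lies in $r_w$, which is exactly the rectangle closure condition.

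The main obstacle is the structural analysis of the second paragraph — proving that the straddling gates form a chain ending at a binary $\wedge$-gate that cleanly separates $Y$ from $Z$ — since this is what makes the single-gate index $w$ enough to decouple the two sides. The delicate points are verifying that all side-branches are one-sided on the $Z$ side, so that the graft introduces no variable clash and produces a consistent assignment, and that grafting preserves $w$ as the bottom straddling gate; both follow from the ancestor lemma and decomposability. Degenerate certificates that never straddle (when $D$ or a subcertificate depends on one side only) are harmless: their assignments form cylinders, which are rectangles, and can be absorbed into a constant number of extra covers, well within the budget $|D|$ since the main family uses only one rectangle per $\wedge$-gate.
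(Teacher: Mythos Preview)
The paper does not give its own proof of this statement; it is quoted from \cite{BovaCMS16} and \cite{DarwicheP10} and used as a black box in the proof of Theorem~\ref{thm:mmwlb}. Your reconstruction follows the certificate-and-splice strategy of those references, and the overall architecture --- classify gates by which side of $(X_t,X\setminus X_t)$ they touch, show that the straddling gates in a certificate form a path with $Z$-only side branches, index rectangles by a distinguished gate on that path, and verify the rectangle closure by grafting --- is sound and yields the stated bound.

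One step in your structural paragraph is wrong as written, although the conclusion you draw from it survives. The parenthetical ``at a straddling $\vee$-gate the chosen child must straddle'' is false: structuredness constrains only $\wedge$-gates, so nothing prevents a straddling $\vee$-gate from having one purely-$Y$ child and one purely-$Z$ child, and the certificate may select either. The chain of straddling gates can therefore extend past your $w$ through further straddling $\vee$-gates, or terminate at a $\vee$-gate, rather than at an $\wedge$-gate. What rescues the argument is this: if $\pi$ assigns at least one $Y$-variable then the chain must contain \emph{some} straddling $\wedge$-gate (a chain made only of $\vee$-gates has no side branches, so $\pi$ would assign variables of one colour only and be degenerate); taking $w$ to be the \emph{last} straddling $\wedge$-gate on the chain, there is no straddling $\wedge$-gate below $c_1$, so the $\pi$-subtree under $c_1$ is a spine of straddling $\vee$-gates ending in a pure subtree, and that subtree must be pure-$Y$ since a pure-$Z$ ending would again leave $\pi$ with no $Y$-assignment. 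Hence the $\pi$-subtree below $c_1$ still assigns only $Y$-variables, your indexing by $\wedge$-gates still covers all non-degenerate certificates, and the graft at $c_1$ goes through exactly as you describe. Only that one-line justification needs to be replaced.
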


Given a $\CNF$-formula $F$, we define $\hat F$ to be the formula $\{K \cup \{c_K\} \mid K \in F\}$ on variables $\{c_K \mid K \in F \} \cup \var(F)$. Intuitively, $\hat F$ is the formula obtained by adding one fresh variable $c_K$ in each clause $K$ of $F$. Our main lower bound  relates  the incidence $\MIMw$ of a monotone $\CNF$-formula to the size of structured $\DNNF$ computing $\hat{F}$.

\begin{theorem}
\label{thm:mmwlb}
Let $F$ be a monotone formula of incidence $\MIMw$ $k$. Any structured $\DNNF$ computing $\hat{F}$ is of size at least $2^{k/2}$.
\end{theorem}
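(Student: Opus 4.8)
The plan is to combine Theorem~\ref{thm:rectlb} with a direct combinatorial argument showing that any rectangle cover of $\hat{F}$ with respect to the balanced partition induced by a vtree must be large when $F$ has large incidence $\MIMw$. So I would fix a structured $\DNNF$ $D$ computing $\hat F$, let $T$ be a vtree it respects, and work with the variable set $X = \var(F) \cup \{c_K \mid K \in F\}$. The first step is to choose a good vertex $t$ of $T$. Since $D$ respects $T$ and $T$ is a binary tree on $X$, a standard argument (pick the deepest vertex whose subtree contains more than, say, one third of the $c_K$ variables — the clause-indicator variables) produces a vertex $t$ such that both $X_t$ and $X \setminus X_t$ contain a constant fraction of the $c_K$'s. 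I will actually want something slightly more careful: I want the partition $(Y,Z) = (X_t, X\setminus X_t)$ to ``split'' the incidence graph in a way that exposes a large induced matching. Here is where I'd invoke the hypothesis: because the incidence $\MIMw$ of $F$ is $k$, for \emph{every} branch decomposition of $\Ginc(F)$ — in particular the one obtained by restricting $T$ to the clause-and-variable leaves — some vertex has an induced matching of size $k$ in the bipartite cut graph. Matching up $t$ with such a vertex, I get an induced matching $M = \{\{x_1,K_1\},\dots,\{x_k,K_k\}\}$ in $\Ginc(F)$ of size $k$ that is ``cut'' by $(Y,Z)$: the variable $x_i$ lands on one side and the clause-indicator $c_{K_i}$ (and the clause $K_i$) lands on the other.

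The second, and main, step is the lower bound on rectangle covers. By Theorem~\ref{thm:rectlb} it suffices to show that any $(Y,Z)$-rectangle cover of $\hat F$ has size at least $2^{k/2}$. The key is to restrict attention to a cleverly chosen set $S$ of $2^k$ satisfying assignments of $\hat F$ parametrized by subsets of $[k]$: for a subset $I \subseteq [k]$, set the indicator $c_{K_i} = 1$ exactly when $i \in I$, and in the ``dual'' slot set $x_i = 1$ exactly when $i \notin I$ — while assigning all the other variables and indicators in a fixed way that satisfies every clause outside $\{K_1,\dots,K_k\}$ and, for the clauses $K_i$, relies on $c_{K_i}$ or $x_i$ to satisfy them (this is exactly why $\hat F$, rather than $F$, is used: the fresh variable guarantees each $K_i$ can always be satisfied, and monotonicity of $F$ guarantees setting variables to $1$ never breaks a clause). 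The induced-matching property ensures these $2^k$ assignments are genuinely free: flipping the pair for index $i$ only affects clause $K_i$, because $x_i$ appears in no other $K_j$ in the matching and $c_{K_i}$ appears only in $K_i$. Now a counting/fooling-set argument: if a single $(Y,Z)$-rectangle $r$ contains two assignments $\sigma_I$ and $\sigma_J$ from $S$ with $I \neq J$, then since $r$ is a rectangle we can glue the $Y$-part of $\sigma_I$ with the $Z$-part of $\sigma_J$ and still satisfy $r$, hence $\hat F$; but for an index $i$ in the symmetric difference of $I$ and $J$, this glued assignment sets \emph{both} $c_{K_i}=0$ and $x_i=0$ (or is otherwise forced to), falsifying clause $K_i$ — contradiction. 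Therefore each rectangle contains at most one assignment of $S$... which only gives $2^k$, not $2^{k/2}$.

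To get the sharper $2^{k/2}$ bound (and to make the gluing argument actually go through without over-constraining), the right move is to use only the $c_K$-side of the matching on one side of the cut and be slightly more generous: a rectangle can contain several assignments of $S$ as long as they agree on $Y \cap \{x_i, c_{K_i}\}$ across all $i$; a clean way is to observe that a rectangle is determined (for the purposes of the fooling argument) by the projection onto $Y$ of its members, and the projections of the $\sigma_I$ onto $Y$ already take $2^{k}$ distinct values — but after the gluing obstruction, within one rectangle all members must share the same value on roughly half the coordinates (the ones whose matching edge has its variable-end in $Y$), leaving the other half free, so a rectangle can absorb at most $2^{k/2}$ of them. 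Hence at least $2^k / 2^{k/2} = 2^{k/2}$ rectangles are needed. Combined with Theorem~\ref{thm:rectlb}, $|D| \geq 2^{k/2}$.

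I expect the main obstacle to be making the bookkeeping of the fooling set precise: one must pin down \emph{exactly} which coordinates of the matching are cut with the variable on the $Y$-side versus the $Z$-side, verify that the ``fixed background assignment'' can be chosen to satisfy all non-matching clauses simultaneously (using monotonicity and the fresh indicator variables), and check that the rectangle-gluing really produces a falsifying assignment of some $K_i$ and not a vacuously satisfied one. The induced-matching condition is doing all the heavy lifting here — it is precisely what guarantees that toggling one matched pair is ``local'' and does not interact with the others — so the proof should be a careful but routine translation of ``large induced matching in the cut $\Rightarrow$ large fooling set $\Rightarrow$ large rectangle cover.'' The passage from $\mimw$ of $\Ginc(F)$ as a graph parameter to the cut induced by the chosen vtree vertex is the one subtlety I'd want to state carefully as a preliminary lemma.
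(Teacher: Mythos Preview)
Your overall strategy is the same as the paper's: identify the vtree $T$ with a branch decomposition of $\Ginc(F)$, pick a vertex $t$ whose cut carries an induced matching of size $k$, and run a rectangle/fooling-set lower bound. Two points, however, are not right as written.

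First, the ``balanced vertex'' detour is unnecessary and slightly misleading. The crucial observation (which you almost make, then blur with the phrase ``restricting $T$ to the clause-and-variable leaves'') is that $\var(\hat F)=\var(F)\cup\{c_K:K\in F\}$ is \emph{literally} in bijection with $V(\Ginc(F))$, so the vtree $T$ \emph{is} a branch decomposition of $\Ginc(F)$. The definition of $\MIMw$ then hands you a vertex $t$ with an induced matching of size $\ge k$ in $G[V_t,V\setminus V_t]$ for free; no balancing is involved.

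Second, and this is the real gap, your fooling-set argument on the full matching $M=\{\{x_i,K_i\}\}_{i\le k}$ does not go through, and your attempted repair in the third paragraph does not fix it. The induced-matching hypothesis is on $G[V_t,V\setminus V_t]$, so it only forbids edges $\{x_j,K_i\}$ that \emph{cross the cut}. If $x_j$ and $K_i$ lie on the \emph{same} side of $(Y,Z)$ --- which can happen whenever the matching edges have mixed orientation --- then $x_j\in K_i$ is perfectly possible, and your glued assignment may still satisfy $K_i$ via $x_j$. So ``toggling one matched pair is local'' is simply false for the full $M$, and the subsequent counting (``a rectangle absorbs at most $2^{k/2}$ of them'') has no justification. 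The paper's fix is the clean one you are groping toward: split $M$ into the edges with $x_i\in V_t$ and those with $c_{K_i}\in V_t$, and keep the larger half $N$ of size $\ge k/2$. Within $N$ all variable-ends are on one side and all clause-ends on the other, so every potential bad edge $\{x_j,K_i\}$ \emph{would} cross the cut and is therefore excluded by inducedness. At that point the paper applies the partial assignment $\tau$ (set $c_C=1$ for $C\notin N$, set $x=0$ for $x\notin N$) to reduce $\hat F$ to $\bigwedge_{e\in N}(x_e\lor c_e)$ and invokes Lemma~\ref{lem:matchlb} for a clean $2^{|N|}\ge 2^{k/2}$ bound; your direct fooling set would also work once restricted to $N$, and would give the same bound for the same reason.
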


The proof of Theorem~\ref{thm:mmwlb} heavily relies on the following lower bound and on Theorem~\ref{thm:rectlb}:
\begin{lemma}
  \label{lem:matchlb} Let $X=\{x_1,\dots,x_k\}$ and $Y = \{y_1,\dots,y_k\}$ be two disjoint sets of $k$ variables. The number of $(X,Y)$-rectangles needed to cover the $\CNF$-formula $F = \bigwedge_{i=1}^k (x_i \lor y_i)$ is at least $2^k$.
\end{lemma}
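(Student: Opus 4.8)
The plan is to show that any $(X,Y)$-rectangle contained in $\sat(F)$ where $F = \bigwedge_{i=1}^k(x_i \lor y_i)$ can satisfy at most one of a carefully chosen set of $2^k$ satisfying assignments, and hence $2^k$ rectangles are needed. The natural candidate set is the \emph{``tight'' assignments}: for each subset $S \subseteq \{1,\dots,k\}$, let $\tau_S$ be the assignment that sets $x_i = 1, y_i = 0$ for $i \in S$ and $x_i = 0, y_i = 1$ for $i \notin S$. Each $\tau_S$ satisfies $F$ since it satisfies each clause $x_i \lor y_i$ by exactly one literal, and there are $2^k$ such assignments, one per subset $S$.

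\textbf{Key step: no rectangle covers two tight assignments.}
First I would take a $(X,Y)$-rectangle $r$ with $\sat(r) \subseteq \sat(F)$ and suppose $\tau_S \models r$ and $\tau_{S'} \models r$ with $S \neq S'$. Pick an index $j$ in the symmetric difference, say $j \in S \setminus S'$. By the rectangle property, the assignment $\tau = \tau_S|_X \cup \tau_{S'}|_Y$ satisfies $r$, hence satisfies $F$. But on coordinate $j$ we have $\tau(x_j) = \tau_S(x_j) = 0$ (since $j \notin S'$... wait, $j \in S$) — let me instead pick $j \in S' \setminus S$ if nonempty, or $j \in S \setminus S'$ otherwise, and swap roles so that the mixed assignment falsifies clause $j$: choosing $j \in S \setminus S'$, the $X$-part gives $x_j = 1$? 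I would select the mixed assignment $\tau_S|_X \cup \tau_{S'}|_Y$ and an index $j$ where $\tau_S$ and $\tau_{S'}$ disagree on $x_j$, i.e.\ exactly the indices in $S \triangle S'$; if $j \in S' \setminus S$ then $\tau_S(x_j)=0$ and $\tau_{S'}(y_j)=0$, so clause $x_j \lor y_j$ is violated, contradicting $\tau \models F$; and if $S \setminus S'$ is the only nonempty part, use the other mixed assignment $\tau_{S'}|_X \cup \tau_S|_Y$ symmetrically. Either way we reach a contradiction, so every rectangle in the cover contains at most one $\tau_S$.

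\textbf{Conclusion.}
Since a $(X,Y)$-rectangle cover $R = \{r_1,\dots,r_q\}$ of $F$ must have $\bigcup_i \sat(r_i) = \sat(F) \supseteq \{\tau_S \mid S \subseteq \{1,\dots,k\}\}$, and each $r_i$ contains at most one tight assignment, we need $q \geq 2^k$. I do not expect a genuine obstacle here; the only care needed is the bookkeeping in the case analysis to make sure that whichever nonempty part of $S \triangle S'$ we exploit, the correspondingly chosen mixed assignment (taking the $X$-coordinates from one side and the $Y$-coordinates from the other) genuinely falsifies a clause of $F$, which follows because in a tight assignment $x_i$ and $y_i$ always have opposite values.
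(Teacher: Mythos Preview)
Your proposal is correct and follows essentially the same argument as the paper: both define the $2^k$ ``tight'' assignments $\tau_S$ (the paper calls them $\tau_K$) and show that no single rectangle in the cover can contain two of them, by mixing the $X$-part of one with the $Y$-part of the other to falsify a clause. The only cosmetic difference is that the paper handles the symmetric-difference index with a single ``without loss of generality $i\in K\setminus K'$'' (then uses $\tau_{K'}|_X\cup\tau_K|_Y$), whereas you split into two cases depending on which side of $S\triangle S'$ is nonempty; your case split is unnecessary since both mixed assignments lie in the rectangle anyway, but it is not wrong.
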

\begin{proof}
Let $\{R_1, \dots, R_q\}$ be a $(X,Y)$-rectangle cover of $F$. For $K \subseteq \{1,\dots,k\}$, we denote by $\tau_K$ the assignment such that $\tau_K(x_i) = 1$ if $i \in K$ and $0$ otherwise and $\tau_K(y_i) = 1-\tau_K(x_i)$. Observe that by definition, for every $K \subseteq \{1,\dots,k\}$, $\tau_K \models F$. We claim that if $\tau_K \models R_i$ then for any $K' \neq K$, we have $\tau_{K'} \not \models R_i$. For the sake of contradiction, assume there exist $K,K'$ such that $K \neq K'$, $\tau_K \models R_i$ and $\tau_{K'} \models R_i$. Without loss of generality, we can assume that there exists $i \in K \setminus K'$. By definition of rectangles, $\tau' = \tau_{K'}|_{X} \cup \tau_{K}|_Y \models R_i$. But $\tau'(x_i) = \tau'(y_i) = 0$ and then $\tau' \not \models F$ which contradicts the definition of $R_i$. Since there are $2^k$ different subsets of $\{1,\dots,k\}$ and each $\tau_K$ satisfies disjoint rectangles, we have that $q \geq 2^k$.
\end{proof}

\begin{proof}[Proof (of Theorem~\ref{thm:mmwlb}).] Let $G = \Ginc(F)$ and $D$ be a structured $\DNNF$ computing $\hat{F}$. We claim that $|D| \geq 2^{k/2}$. 

Let $T$ be the vtree respected by $D$. Observe that the variables of $\hat{F}$ are in one to one correspondence with $V(G)$ thus we can see $T$ as a branch decomposition of $G$. Since $G$ is of $\MIMw$ $k$, there exists a vertex $t$ of $T$ such that there is an induced matching $M = \{(x_1,y_1), \dots, (x_q,y_q)\}$ with $q \geq k$ in $G[V_t, V(G) \setminus V_t]$ where $V_t$ denotes the labels of the leaves of $T_t$. Let $e = (x,y)$ be an edge of $M$. Since it is an edge of $G$, too, one end point of $e$ corresponds to a variable $x_e$ of $F$ and the other to a clause $c_e \in F$. Let $M'$ be the set of edges $e$ of $M$ such that $x_e \in V_t$ and $c_e \notin V_t$ and let $M''$ be the set of edges $e$ of $M$ such that $x_e \notin V_t$ and $c_e \in V_t$. It is readily verified that $M = M' \uplus M''$. Let $N$ be the largest of these two sets. $N$ is thus an induced matching of $G[V_t,V(G) \setminus V_t]$ of size at least $k/2$. Moreover,  if $e,e' \in N$ are distinct, we have $x_{e'} \notin c_e$. Indeed, if $x_{e'} \in c_e$ then they are connected by an edge of $G$ and this edge is across $V_t$ and $V(G) \setminus V_t$ by construction of $N$. Thus, if such an edge exists, it violates the assumption that $N$ is an induced matching of $G[V_t, V(G) \setminus V_t]$.

Now let $\tau$ be the following partial assignment of $\var(\hat{F})$: if $C$ is a clause that does not appear in $N$, we let  $\tau(C) = 1$. If $x$ is a variable of $F$ that does not appear in $N$,  we let $\tau(x) = 0$. We claim that $\hat{F}[\tau] \equiv \bigwedge_{e \in N} (x_e \lor c_e)$. Indeed, each clause $C$ that does not appear in $N$ is already satisfied in $\hat{F}[\tau]$ since $\tau(C) = 1$ and for the remaining clauses, the variables that do not appear in $N$ disappear as they are set to $0$ (remember that $F$ is monotone). Moreover, if $e,e' \in N$ are distinct edges of $N$, we have that $x_e \notin c_{e'}$ thus the only variables remaining in the clause $c_e$ is $x_e$ for each $e \in N$. 

Now since $\hat{F}$ is computed by $D$, $\hat{F}[\tau]$ is computed by $D' = D[\tau]$ which is a structured $\DNNF$ smaller than $D$. By Theorem~\ref{thm:rectlb}, there is a $(V_t,V(G) \setminus V_t)$-rectangle cover of $D'$ of size at most $\size(D')$ and by Lemma~\ref{lem:matchlb}, we need at least $2^{|N|} \geq 2^{k/2}$ rectangles to cover $F[\tau]$. Thus, $\size(D) \geq \size(D') \geq 2^{k/2}$.
\end{proof}

Theorem~\ref{thm:betalb} is a corollary of Theorem~\ref{thm:mmwlb} and Theorem~\ref{thm:betavsmm}:
\begin{proof}[Proof of Theorem~\ref{thm:betalb}] 
Let $F$ be a $\beta$-acyclic formula. We claim that $\hat{F}$ is also $\beta$-acyclic. Indeed, let $(x_1,\dots,x_n)$ be a $\beta$-elimination order for $\calH(F)$. We claim that $(c_1,\dots,c_m,x_1,\dots,x_n)$ is a $\beta$-elimination order of $\calH(\hat{F})$ where $c_1,\dots,c_m$ are the variables of $\hat{F}$ corresponding to the clauses of $F$. Indeed, for all $i$, $c_i$ is in exactly one edge of $\calH(\hat{F})$ and can thus be eliminated from the start. Finally, $\calH(\hat{F}) \setminus \{c_1,\dots,c_m\} = \calH(F)$, thus $(x_1,\dots,x_n)$ is a $\beta$-elimination order of $\calH(\hat{F}) \setminus \{c_1,\dots,c_m\}$.

To every hypergraph $\calH$, we can associate a monotone formula $\CNF(\calH)$ whose variables are the vertices of $\calH$ and clauses are the edges of $\calH$ without negations. It is readily verified that the hypergraph of $\CNF(\calH)$ is $\calH$. Let $\calG$ be the family of $\beta$-acyclic hypergraphs with $\MIMw$ of $\Omega(n)$ from Theorem~\ref{thm:betavsmm}  and let $\calF = \{\widehat{\CNF(\calH)} \mid \calH \in \calG \}$. From what precedes, $\calF$ is a family of $\beta$-acyclic hypergraphs and by Theorem~\ref{thm:mmwlb}, if $F \in \calF$ has $m$ clauses and $N = n+m$ variables then any structured $\DNNF$ computing $F$ is of size at least $2^{\Omega(n)}$. The statement of Theorem~\ref{thm:betalb} follows since the number of edges in a $\beta$-acyclic hypergraph with $n$ vertices is at most $n(n+1)/2$~(Remark 13 in \cite{BraultB14}). Thus, $N = O(n^2)$, {\em i.e.} $n = \Omega(\sqrt N)$.
\end{proof}

\section{Discussion}
\label{sec:conclusion}

We discuss here further directions that can be studied from the results presented in this paper. In Section~\ref{sec:deviation}, we have shown that $\beta$-acyclic formulas cannot be compiled into structured $\DNNF$ contrary to other known tractable classes. It would be interesting to study the opposite question, that is, to understand if classes tractable with respect to the STV-algorithm can be compiled into $\decDNNF$. A positive answer to this question would open interesting perspectives as it would imply that all known tractable structural restrictions for $\sSAT$ can be processed using exhaustive DPLL with caching, which could lead to a practical use of such theoretical result and to the design of interesting heuristic for the order in which variables are eliminated in DPLL based on structural restrictions. A negative answer would show that some ``easy'' cases are missed by practical tools and that it would be worth investing time to develop practical tools taking the formula structure into account.

Another direction is suggested by Theorem~\ref{thm:mmwlb} which says that the $\MIMw$ of the formula is closely related to the size of the smallest structured $\DNNF$ for $\hat F$. The most general graph parameter that is known to lead to polynomial time execution with the STV-algorithm is the $\MIMw$: $\sSAT$ can be solved in time $m^{\Omega(k)} \poly(n+m)$ on a formula with $m$ clauses, $n$ variables and of $\MIMw$ $k$. Theorem~\ref{thm:mmwlb} almost proves the optimality of such running time for compilation into structured $\DNNF$. 

% \newpage

\bibliography{biblio.bib}

% that's all folks
\end{document}